\documentclass[10pt, doublecolumn]{IEEEtran}
\pdfoutput=1
\usepackage{cite}
\usepackage{indentfirst}
\usepackage{graphicx}
\usepackage{changepage}
\usepackage{stfloats}
\usepackage{amsfonts,amssymb}
\usepackage{bm}
\usepackage{algorithm}
\usepackage{algorithmic}
\usepackage{amsmath}
\usepackage{amsmath,amssymb,amsfonts}
\usepackage{times}
\usepackage{url}
\usepackage{cite}	
\usepackage{textcomp}
\usepackage{color}
\usepackage{setspace}
\usepackage{enumitem}
\usepackage{float}
\usepackage{diagbox}
\usepackage[T1]{fontenc}
\usepackage[utf8]{inputenc}
\usepackage{authblk}
\usepackage{threeparttable}
\usepackage{booktabs}
\usepackage{footnote}
\usepackage{graphicx}
\usepackage{pifont}
\usepackage{subfigure}
\usepackage{mathrsfs}
\usepackage{bbm}
\usepackage{dsfont}
\usepackage{colortbl}
\usepackage{multirow}
 \usepackage{booktabs,threeparttable}
\usepackage{makecell}
\usepackage[dvipsnames,table]{xcolor}

\definecolor{HeaderBlue}{HTML}{2C3E50}   
\definecolor{RowGray}{HTML}{F2F4F4}     
\definecolor{RowWhite}{HTML}{FFFFFF}    

\setcellgapes{3pt}
\makegapedcells

\allowdisplaybreaks[4]

\newenvironment{proof}{{\indent  \indent \it Proof:}}{\hfill $\blacksquare$}

\begin{document}
\title{Constellation Selection and Power Control for OFDM-based ISAC: From Theory to Prototype}

\author{
	Kaitao Meng, \textit{Member, IEEE}, Kawon Han, \textit{Member, IEEE}, Christos Masouros, \textit{Fellow, IEEE}, and Fan Liu , \textit{Senior Member, IEEE}
	\thanks{Preliminary versions of this paper were presented at the IEEE International Conference on Communications (ICC) 2026 \cite{Kaitao2026ICC}}
	\thanks{The work of K. Meng was supported in part by UKRI under Grant EP/Y02785X/1. This work of K. Han was supported by the National Research Foundation of Korea (NRF) grant by the Korea government (MSIT) (RS-2026-25497418). The work of C. Masouros is supported by the Smart Networks and Services Joint Undertaking (SNS JU) project 6G-MUSICAL under Grant Agreement No. 101139176, and by the Engineering and Physical Sciences Research Council UK-India project ICON with Grant Agreement No UKRI859. This work of F. Liu was supported in part by the National Natural Science Foundation of China (NSFC) under Grant 62522107 and Grant 62331023.}
	\thanks{K. Meng is with the Department of Electrical and Electronic Engineering, University of Manchester, Manchester, UK (email: kaitao.meng@manchester.ac.uk). K. Han, and C. Masouros are with the Department of Electronic and Electrical Engineering, University College London, London, UK (emails: \{kawon.han, c.masouros\}@ucl.ac.uk). F. Liu is with the National Mobile Communications Research Laboratory, Southeast University, China (emails: fan.liu@seu.edu.cn). (Corresponding author: Kawon Han)   
	
}}


\maketitle


\begin{abstract}
Integrated sensing and communication (ISAC) techniques can leverage existing, wide-coverage communication networks to perform sensing tasks, enabling large-scale and low-cost target sensing. However, the inherent randomness of communication data payloads introduces undesired sidelobes in the ambiguity function that may degrade target detection and parameter estimation performance. This paper develops a communication-centric ISAC framework that is standards-compliant and compatible with existing devices. Specifically, we propose a low-complexity constellation selection scheme over a {\textit{finite}}, {\textit{off-the-shelf}} alphabet, achieving an efficient sensing-communication trade-off without custom waveforms or frame-structure changes. To this end, we analyze two classical sensing receivers including matched filtering (MF) and reciprocal filtering (RF) for ranging measurements, {{and derive closed-form sensing laws that link constellation statistics to sensing performance. Under any finite-alphabet constellation combination, MF sidelobes depend on the weighted sum of the kurtosis values of the per-subcarrier constellations, while RF noise enhancement depends on the inverse second moment of the transmit symbol, providing a tractable expression for tuning the sensing-communication trade-off.}} Guided by these insights, we propose a low-complexity design that retains a finite constellation set and shapes power under communication quality constraints. We prove that in flat-fading channels, there exists a Pareto-optimal solution that activates no more than three constellations. For frequency-selective channels, a bilevel algorithm with closed-form inner updates attains near-optimal performance while sharply reducing computational complexity. We validate the entire theoretical pipeline with numerical simulations as well as experimental results. The results corroborate the theory, constellation mixing and coherent multi-symbol processing reliably reveal markedly weaker targets at practical signal-to-noise ratios. 
\end{abstract}   

\begin{IEEEkeywords}
	Integrated sensing and communication (ISAC), constellation design, ambiguity function sidelobes, matched filtering, reciprocal filtering, adaptive modulation, kurtosis, coherent integration, SDR experiments.
\end{IEEEkeywords}

\newtheorem{thm}{\bf Lemma}
\newtheorem{remark}{\bf Remark}
\newtheorem{Pro}{\bf Proposition}
\newtheorem{theorem}{\bf Theorem}
\newtheorem{Assum}{\bf Assumption}
\newtheorem{Cor}{\bf Corollary}
\newtheorem{Def}{\bf Definition}
\newtheorem{assumption}{\bf Assumption}

\section{Introduction}
High-precision localization and high-rate data delivery underpin emerging applications such as autonomous driving and advanced augmented reality \cite{lu2021real, zhang2022artificial}. Operating separate sensing and communication (S\&C) stacks duplicates front-end hardware and frame structures, leads to uncoordinated spectrum use and mutual interference, motivating integrated sensing and communication (ISAC) techniques \cite{Zhang2021OverviewSignal, Liu2022SurveyFundamental, Meng2023SensingAssisted, 10473676, 10216343}. By reusing infrastructure, spectrum, and configurable waveforms, ISAC enables simultaneous data transmission and echo-based perception with quantifiable gains in spectral, energy, and cost efficiency \cite{Cui2021Integrating, meng2024integrated}. The unified architecture further permits joint design of waveform \cite{han2025signaling}, novel frame structure \cite{Hua20243DMultiTarget}, and resource allocation to balance rate, reliability, and sensing resolution under common constraints \cite{Wang2022PartiallyConnected, Lu2024Integrated}. Beyond link-level co-design, recent work elevates ISAC to the network-level cooperation for interference control \cite{Meng2024CooperativeISACMag}, cooperative sensing \cite{11412372, Meng2024CooperativeTWC}, and cross-link scheduling \cite{han2025network}. Consistent with this trajectory, ISAC has been identified by the ITU as a representative 6G usage scenario.

Given the ubiquity of cellular infrastructure already providing wide-area coverage, communication-centric ISAC designs that reuse existing waveforms and frame structures are attractive for near-term deployment because they adhere to existing standards and remain compatible with legacy devices \cite{meng2023network, Wei2023Integrated}. 
{{A practical obstacle, however, is that data payload symbols are inherently random \cite{Lu2024Random}. In an orthogonal frequency-division multiplexing (OFDM) sensing receiver, this randomness appears in the matched-filter/ambiguity-function output as {sidelobes}, i.e., undesired off-peak responses away from the true target delay. If these sidelobes become strong, they can raise the apparent clutter floor, mask weak targets, or even create ghost peaks \cite{Chitre2020AFShaping}. Existing efforts have shown that such randomness-induced sidelobes can indeed be reduced through precoder design \cite{Li2025MIMOOFDM} or sensing-oriented resource-block design \cite{Li2025SensingOriented}. However, these approaches typically rely on either instantaneous transmit-signal optimization or dedicated sensing resources. 
Importantly, since the sidelobes originate from the randomness of the communication payload, they can also be mitigated at the statistical level. In particular, constellation design provides a natural lever to control the payload statistics that govern sidelobe behavior, thereby improving the sensing-communication trade-off.}} Constellation design operates at the modulation layer, where optimizing the symbol alphabet shapes the statistics that set ambiguity-function sidelobes and effective noise. Although constellation optimization offers a lightweight lever, realizing it in an implementable, standards-aligned manner is non-trivial and remains an open problem \cite{3gpp_ts}. {{This motivates deeper investigation of standards-compliant constellation shaping, mixing, and selection strategies that preserve the waveform, alphabet/bit-mapping, and 5G new radio (NR) time-frequency resource grid.}}

Recent work has sharpened the link between data-modulation (constellation mapping) and sensing performance. {In particular under single antenna transmission and average-power normalization, \cite{Liu2025Opitimal} establishes the ranging optimality of OFDM, and multiple studies show that, under uniform average power allocation, PSK constellations generally yield lower range-sidelobe levels than QAM constellations, thereby improving delay resolution and ambiguity suppression \cite{zhang2023input, keskin2025fundamental}. 
Specifically, ambiguity-function variance is governed by constellation kurtosis, so high-kurtosis signaling leads to larger sidelobe fluctuations \cite{du2024reshaping}. Intuitively, kurtosis is a fourth-order statistic that quantifies how amplitude-dispersed a symbol distribution is. Constant-modulus constellations, such as PSK, typically have low kurtosis, whereas non-constant-modulus constellations, such as 64QAM, generally exhibit higher kurtosis due to their larger amplitude variations. As a result, high-kurtosis signaling tends to produce more random matched-filter sidelobes, which is undesirable for stable ranging.}
 Complementary to these results, \cite{Liu2025Uncovering} analyzes sensing performance under a uniform signaling setup, where all subcarriers use the same constellation with fixed power, and proves that OFDM is optimal for ranging. Existing results mainly rely on a homogeneous signaling assumption with the same constellation across all active subcarriers, leaving the sensing implications and sensing-communication trade-offs of subcarrier-wise constellation assignment largely unexplored.  {{In parallel, early studies pursue constellation optimization approaches that modify the signal alphabet itself to target theoretical performance limits \cite{Hu2025Learning, Geiger2025JCS, Du2023GCWkshpsPCS, Yang2024TWC_RandomWF}. However, such redesigns are difficult to deploy in practical systems because they depart from standardized constellations, may require signalling new constellation parameters and bit mappings, necessitate receiver demapper redesign, and can invalidate existing link-adaptation tables.}} These limitations motivate a standards-compliant alternative: Retain the existing constellation set, select suitable members per scenario, and jointly shape power to satisfy communication QoS while improving sensing. To our knowledge, a low-complexity method that performs this adaptive constellation selection and subcarrier power control has been largely underexplored. 

To address the above gap, in this paper we propose a low-complexity, standards-compliant scheme for adaptive constellation selection and subcarrier power control. {{Unlike conventional communication power control, which targets rate or QoS alone \cite{yates2002framework}, our design introduces an explicit degree of freedom to balance sensing and communication via joint power and constellation control. Specifically, we model the payload explicitly as a random process and develop an OFDM-based, communication-centric ISAC co-design that is compatible with existing standards. We analyze two canonical sensing receiver chains: matched filtering (MF) \cite{leuschen2002matched} and reciprocal filtering (RF) \cite{Wojaczek2019Reciprocal}. RF suppresses target/clutter sidelobes by equalizing data dependence but incurs a waveform-dependent SNR loss relative to MF, yielding a tunable trade-off. While \cite{11202391} contrasts MF and RF sidelobes as a function of target SNR, a rigorous, end-to-end characterization of how per-subcarrier constellation design and power allocation propagate through MF/RF chains to determine OFDM sensing performance (e.g., range-sidelobe levels, estimation accuracy) remains largely open.}}
To address this issue, we derive closed-form performance laws that expose the roles of constellation moments: (i) MF sidelobe power is dictated by the per-subcarrier fourth moment; (ii) RF noise amplification is governed by the inverse second moment of modulated signals. The analysis extends to multi-symbol coherent integration and yields the expected gain, which we exploit for super-resolution delay estimation via a Matrix Pencil estimator. 
Rather than redesigning alphabets, we select from a finite, standards-compliant constellation pool and shape power. This is compatible with existing OFDM protocols and demappers.
The main contributions of this paper are summarized as follows:
\begin{itemize}
	\item {{We propose a low-complexity, standards-compliant scheme for adaptive constellation selection and subcarrier power control in ISAC systems, by mixing a {small} set of off-the-shelf constellations across subcarriers and allocating power under communication quality constraints. We derive closed-form sensing laws that decouple constellation statistics from power allocation.}} For MF, sidelobes are governed by the subcarrier-dependent fourth-order moments; for RF, the effective noise is governed by the subcarrier-dependent inverse second moment of the transmitted symbols. {{To the best of our knowledge, such a receiver-oriented characterization of subcarrier-dependent waveform design has not been explicitly established in the prior OFDM-ISAC literature.}}
	
	\item {{Based on the derived sensing laws, we establish receiver-aware power-allocation principles for payload-bearing OFDM sensing. For MF, less power should be assigned to subcarriers employing constellations with larger kurtosis, while for RF, more power should be allocated to subcarriers with larger inverse second moments. Moreover, within any subset of subcarriers sharing the same constellation statistics, equal power allocation is proved to be optimal, revealing an explicit structural property of the joint sensing-communication design. }}

	\item {{We prove that, in flat-fading channels, there always exists a Pareto-optimal solution that activates no more than three constellations. Consequently, without loss of optimality, the original high-dimensional simplex optimization can be reduced to a finite set of low-cardinality mixtures. Compared with fixed-power analyses in which all subcarriers use a single constellation, our method delivers a standards-compliant heterogeneous mixture policy. For frequency-selective channels we design a bilevel scheme that maintains binary constellation decisions at every iterate. The method attains near-optimal performance at a fraction of the complexity.}}
	
	\item {{We validate the proposed theory and algorithms through simulations and SDR-based over-the-air experiments in real environments. The results confirm that increasing the constant-envelope share suppresses MF sidelobes, coherent integration delivers the predicted gain, and mixed-constellation transmission remains reliably demodulable when the per-subcarrier constellation map is known. The experiments also provide physical insight into the MF-RF comparison: MF is preferable in low-SNR regimes, whereas RF becomes superior in high-SNR or clutter-limited regimes because it mitigates symbol-induced clutter interference through deconvolution of the known transmitted symbols.}}
\end{itemize}

Section~II presents the system, signal, and detection models (MF/RF) and defines ESL. Section~III derives single- and multi-symbol sensing expressions and RF noise analysis. Section~IV develops optimal power rules and the joint constellation and power formulations, including structural results and scalable algorithms. Section~V reports simulations and Pluto-SDR experiments. Section~VI concludes.

{{Lowercase letters denote time-domain quantities, uppercase letters denote frequency-domain quantities, and bracket notation $[\cdot]$ is used for discrete-time/frequency indices. Subscripts are reserved for target, mode, and symbol-block labels.}}

\begin{figure}[t]
	\centering
	\includegraphics[width=7.4cm]{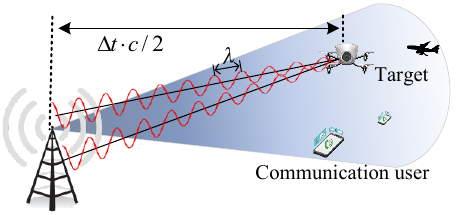}
	\vspace{0mm}
	\caption{Scenarios of ranging measurements using same communication signals.}
	\label{figure1}
\end{figure}

\section{System Model}\label{SystemModel}

We consider a monostatic ISAC system employing an OFDM waveform, and propose a low-complexity {{constellation selection scheme}} to reveal a new trade-off between sensing and communication, while maintaining compatibility with existing communication standards. The ISAC transmitter  emits signals modulated with random communication symbols, which are received by several communication users and simultaneously reflected back to the sensing receiver by $Q$ targets at different ranges, as shown in Fig. \ref{figure1}. The sensing receiver employs classical signal processing techniques such as MF and RF to estimate the range of targets using the known transmitted signals.

\begin{figure}[t]
	\centering
	\includegraphics[width=8.4cm]{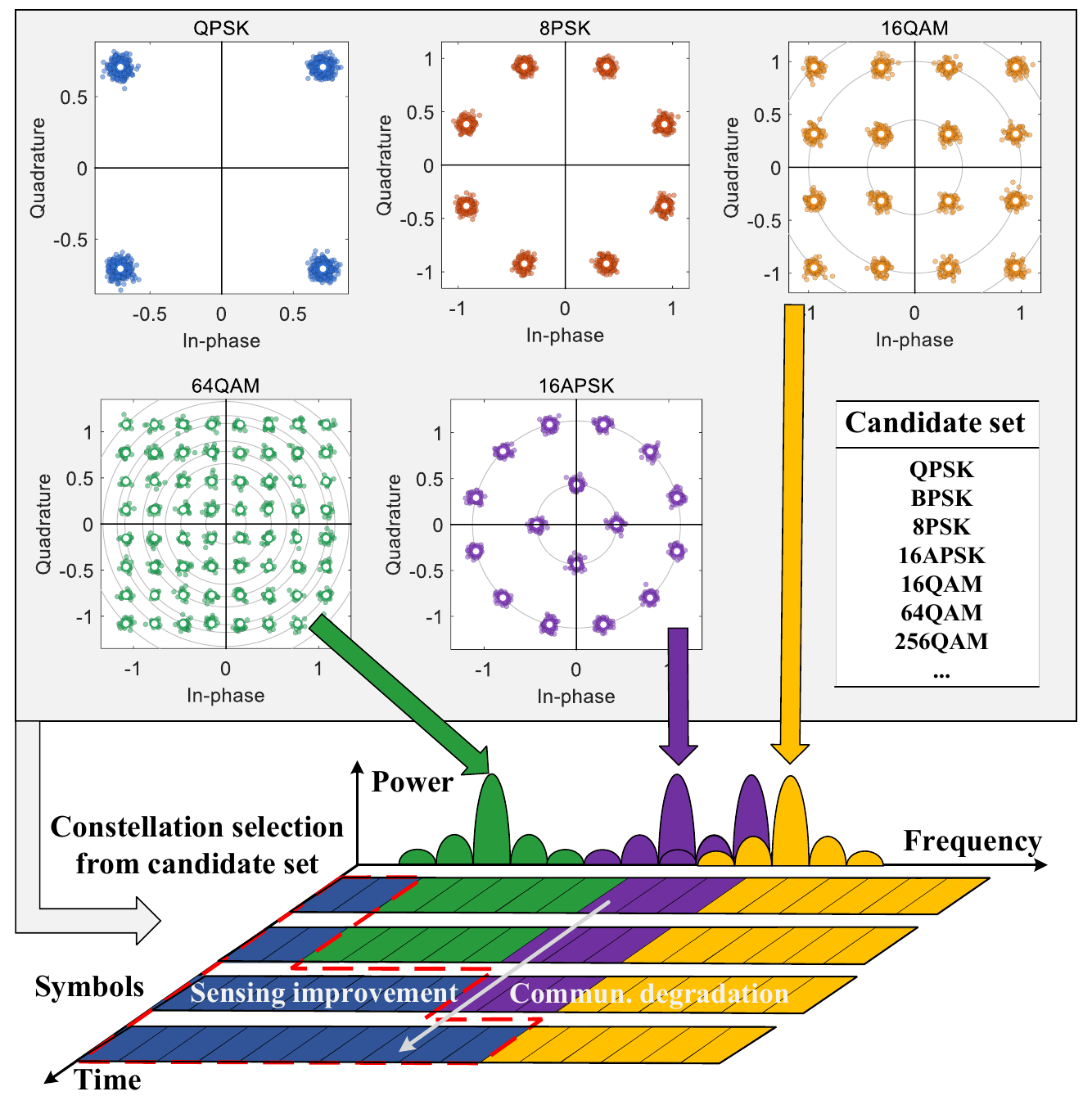}
	\vspace{0mm}
	\caption{Illustration of constellation selection to balance sensing and communication performance.}
	\label{figure1b}
\end{figure}

\subsection{Signal Model}
\label{SignalModel}

{{Let $\mathbf{S} = [S[0], S[1], \ldots, S[N-1]]^T \in \mathbb{C}^{N \times 1}$ denote the transmitted symbol vector over $N$ subcarriers. 
		We consider $J$ candidate modulation constellations, denoted by $\{\mathcal{C}_j\}_{j=1}^J$, where each $\mathcal{C}_j \subset \mathbb{C}$ is a full constellation set (e.g., QPSK, 16QAM, 64QAM, or APSK). 
		Each symbol $S[n]$ is independently drawn from one selected constellation in $\{\mathcal{C}_j\}_{j=1}^J$ according to the sensing and communication requirements, as illustrated in Fig.~\ref{figure1b}.}}\footnote{{Since the choice is restricted to a small finite catalog of standard constellations, the transmitter only needs to signal a compact constellation index to the receiver, analogous to the modulation and coding scheme (MCS) indicator, either per subcarrier or per subcarrier block. In practice, such adaptation is intended for frame-level or scheduling-level updates, rather than symbol-by-symbol signaling, so that the associated control overhead remains small.}}
We assume zero-mean, unit-power symbols, i.e., $\mathbb{E}[S[n]] = 0$ and $\mathbb{E}[|S[n]|^2] = 1$. The kurtosis of the constellation on subcarrier $n$ is defined as \cite{Papoulis2002PRVSP}
\begin{equation} \vspace{0mm}
	\mu_{4,n} \triangleq \frac{\mathbb{E}[|S[n]-\mathbb{E}[S[n]]|^4]}{\mathbb{E}[|S[n]-\mathbb{E}[S[n]]|^2]^2},
\vspace{0mm} \end{equation}
with $\mu_{4,n} \geq 1$. The expected inverse second moment on subcarrier $n$ \cite{11202391} is defined as 
\begin{equation} \vspace{0mm}
	\nu_{-2,n} = \mathbb{E}[ |S[n]|^{-2} ].
\vspace{0mm} \end{equation}
We adopt the unitary $N$-point DFT/IDFT for OFDM modulation. The corresponding time-domain OFDM signal is generated by an $N$-point inverse discrete Fourier transform (IDFT), the discrete-time domain signal is given by 
\begin{equation} \vspace{0mm}
{\bf{x}} = {{\bf F}}^H_N\,{\bf P}\,{\bf S},
\vspace{0mm} \end{equation}
where ${{\bf F}}^H_N$ is the unitary IDFT matrix of size $N$, ${\bf{P}} = {\rm{diag}}\left( \left[ \sqrt{P_0}, \cdots, \sqrt{P_{N-1}} \right] \right)$ denotes the power allocation matrix, and  $P_n$ represents the power allocated to the $n$-th subcarrier. Here, ${\bf{x}} = \left[ x[0], \cdots, x[N-1] \right]$, where
\begin{equation} \vspace{0mm}
	x[t] = \frac{1}{\sqrt{N}} \sum_{n=0}^{N-1} \sqrt{P_n} S[n] e^{j\frac{2\pi}{N} t n}, \quad t = 0,1,\ldots,N-1.
\vspace{0mm} \end{equation}
A cyclic prefix (CP) of appropriate length is appended to mitigate inter-symbol interference, and is discarded at the receiver \cite{goldsmith2005wireless}. Assuming \(Q\) point targets with round-trip propagation delays \(\{\tau_q\}_{q=1}^Q\), the received baseband signal \(\mathbf{y} \in \mathbb{C}^N\) can be modeled as
\begin{equation} \vspace{0mm}
	{\mathbf{y}} = \sum_{q=1}^Q \alpha_q {\mathbf{J}}_{\tau_q} {\mathbf{x}} + {\mathbf{z}},
\vspace{0mm} \end{equation}
where {{\(\alpha_q \in \mathbb{C}\) is the complex reflection coefficient (including pathloss) of the \(q\)-th target, and it is assumed that $\alpha_q \sim \mathcal{CN}({0}, \sigma_{\alpha_q}^2)$.}} Also, \(\mathbf{z} \sim \mathcal{CN}(\mathbf{0}, \sigma_z^2 \mathbf{I})\) is additive white Gaussian noise, and \({\mathbf{J}}_{k}\) is the \(N \times N\) cyclic shift matrix corresponding to delay \(k\) defined as
\begin{equation} \vspace{0mm}
	{\mathbf{J}}_k =
	\begin{bmatrix}
		\mathbf{0}_{k\times (N-k)} & \mathbf{I}_k \\
		\mathbf{I}_{N-k} & \mathbf{0}_{(N-k)\times k}
	\end{bmatrix}, 
	\quad k=0,1,\ldots,N-1 .
\vspace{0mm} \end{equation}
The delay \(\tau_q\) relates to physical range \(d_q\) by
$
	\tau_q = \mathrm{round} \left(  \frac{2 d_q}{c T_s} \right),
$
where \(c\) is the speed of light and \(T_s\) is the sampling interval. This discrete delay simplifies the time-domain representation as cyclic shifts.  

Let \(X[n]\) and \(Y[n]\) be the \(N\)-point DFTs of \(\mathbf{x}\) and \(\mathbf{y}\), respectively, where $X[n] = \sqrt{P_n} S[n]$. 
In a multi-target environment with additive white Gaussian noise (AWGN), the received frequency-domain signal can be expressed as
\begin{equation} \vspace{0mm}\label{RF_Y}
	Y[n] = \sum_{q=1}^Q \alpha_q X[n] e^{-j \frac{2\pi}{N} n \tau_q} + Z[n].
\vspace{0mm} \end{equation}

{{For coherent processing over multiple OFDM symbols, the above single-symbol model extends naturally to the $m$-th OFDM symbol as
		\begin{equation} \vspace{0mm}
			x^{(m)}[t] = \frac{1}{\sqrt{N}}\sum_{n=0}^{N-1}\sqrt{P_n}\, S^{(m)}[n] e^{j\frac{2\pi}{N}nt},
			\vspace{0mm} \end{equation}
		with the corresponding received signal
		\begin{equation} \vspace{0mm}
			\mathbf{y}^{(m)} = \sum_{q=1}^{Q}\alpha_q \mathbf{J}_{\tau_q}\mathbf{x}^{(m)} + \mathbf{z}^{(m)},
			\quad m=1,\ldots,M.
			\vspace{0mm} \end{equation}
		Here, \(\{S^{(m)}[n]\}\) are independent across \(m\) and, for each subcarrier \(n\), follow the same zero-mean, unit-power distribution characterized by \(\mu_{4,n}\) and \(\nu_{-2,n}\) defined above. We assume quasi-static targets over the coherent processing interval, such that \(\{\alpha_q,\tau_q\}_{q=1}^{Q}\) remain approximately unchanged across the \(M\) OFDM symbols.}}

\subsection{Ranging Estimation}\label{RangeEstimation}

\subsubsection{Matched Filtering}

{{For the considered OFDM sensing framework, the MF is implemented in the frequency domain after CP removal and FFT. Let \(X[n]\) and \(Y[n]\) denote the \(N\)-point DFTs of the transmitted and received signals, respectively. The MF applies matched subcarrier-wise weighting as
		\begin{equation} \vspace{0mm}
			\tilde{Y}^{\mathrm{MF}}[n] = \sqrt{N}\,Y[n]X[n]^*, \quad n=0,1,\ldots,N-1.
			\vspace{0mm} \end{equation}
		The corresponding time-domain range profile is then obtained via IDFT:
		\begin{equation} \vspace{0mm}
			\tilde{y}^{\mathrm{MF}}[t] = \frac{1}{\sqrt{N}} \sum_{n=0}^{N-1} \tilde{Y}^{\mathrm{MF}}[n] e^{j \frac{2\pi}{N} n t}.
			\vspace{0mm} \end{equation}
		Equivalently, by the circular-correlation theorem, the same MF output admits the following time-domain interpretation:
		\begin{equation} \vspace{0mm}
			\tilde{y}^{\mathrm{MF}}[t] = \mathbf{x}^H \mathbf{y}_t,
			\vspace{0mm} \end{equation}
		where \(\mathbf{y}_t = [y[t], y[t+1], \ldots, y[t+N-1]]^T\) is the received vector segment starting at sample \(t\), with indices modulo \(N\). Substituting \(\mathbf{y}\), we obtain
		\begin{equation} \vspace{0mm}
			\tilde{y}^{\mathrm{MF}}[t] = \sum_{q=1}^Q \alpha_q \mathbf{x}^H {\mathbf{J}}_{\tau_q - t} \mathbf{x} + \tilde{z}^{\mathrm{MF}}[t],
			\vspace{0mm} \end{equation}
		where \(\tilde z^{\mathrm{MF}}[t]=\mathbf x^H \mathbf J_{N-t}\mathbf z\), which is distributionally equivalent to \(\mathbf x^H\mathbf z\) because \(\mathbf z\sim\mathcal{CN}(\mathbf 0,\sigma_z^2\mathbf I)\) and \(\mathbf J_{N-t}\) is unitary. The key term
		\begin{equation} \vspace{0mm}
			{r}_k = \mathbf{x}^H {\mathbf{J}}_k \mathbf{x} = {r}_{N-k}^*, \quad k=0,1,\ldots,N-1,
			\vspace{0mm} \end{equation}
		defines the periodic autocorrelation functions (ACFs) of the transmit signal \(\mathbf{x}\). Hence, the MF output can be written as a superposition of shifted autocorrelation functions:
		\begin{equation} \vspace{0mm}
			\tilde{y}^{\mathrm{MF}}[t] = \sum_{q=1}^Q \alpha_q {r}_{\tau_q - t} + \tilde{z}^{\mathrm{MF}}[t].
			\vspace{0mm} \end{equation}}}
{Without loss of generality, let $q\in\mathcal{Q} = \{1,\cdots,Q\}$ denote the target of interest, and echoes from all $k\in\mathcal{Q}\setminus\{q\}$ are modeled as clutter.} To quantify the sensing quality under multi-target scenarios and additive noise, we define an effective signal-to-interference-plus-noise ratio (SINR) to reveal the relationship between the signal after receive signal processing and sensing task performance. Specifically, the effective SINR at the true delay $\tau_q$ of the $q$-th target can be defined by
\begin{equation} \vspace{0mm}
	\mathrm{SINR}_q = \frac{\mathbb{E} \left[ \left| s_q^{\mathrm{MF}}[\tau_q] \right|^2 \right]}{\mathbb{E} \left[ \left| \tilde{y}^{\mathrm{MF}}[\tau_q] - s_q^{\mathrm{MF}}[\tau_q] \right|^2 \right]},
\vspace{0mm} \end{equation}
where $\tilde{y}^{\mathrm{MF}}[t]$ is the total output of the sensing filter at delay $t$, and $s_q^{\mathrm{MF}}[t]$ denotes the desired signal contribution from the $q$-th target, i.e., the noise-free component at that location.
Then, $\tilde{y}^{\mathrm{MF}}[t]$ is a superposition of delayed ACFs plus filtered noise, given by
$
\tilde{y}^{\mathrm{MF}}[t] = \sum_{q=1}^{Q} \alpha_q r_{\tau_q - t} + \tilde{z}^{\mathrm{MF}}[t]
$. The desired signal term is $s_q^{\mathrm{MF}}[t] = \alpha_q r_{\tau_q - t}$, while the remaining part of $\tilde{y}^{\mathrm{MF}}[t]$ corresponds to clutter interference from the other targets and filtered noise. Thus
\begin{equation} \vspace{0mm}\label{SINR_Expression_MF}
	\mathrm{SINR}_q^{\mathrm{MF}} = \frac{|\alpha_q|^2 \mathbb{E}[|{r}_0|^2]}{\sum_{p \neq q} |\alpha_p|^2 \mathbb{E}[|{r}_{\tau_q - \tau_p}|^2] + \mathbb{E}[|\tilde{z}^{\mathrm{MF}}[\tau_q]|^2]}.
\vspace{0mm} \end{equation}
\begin{remark}
	{{In (\ref{SINR_Expression_MF}), $\mathbb{E}[|r_0|^2]$ captures the mainlobe energy at the true delay bin, while $\mathbb{E}[|r_{\tau_q-\tau_p}|^2]$ for $p\neq q$ quantifies the masking effect caused by other targets through the off-peak values of the transmit autocorrelation. Therefore, (\ref{SINR_Expression_MF}) should be interpreted as a local output SINR at the $q$-th target bin, rather than a global peak-to-sidelobe ratio over the entire delay axis. 
			The self-sidelobes of the $q$-th target indeed exist, but they appear at off-target delay bins $t\neq \tau_q$ and hence do not enter the denominator of (\ref{SINR_Expression_MF}), which is evaluated specifically at the true delay bin. In the single-target case ($Q=1$), (\ref{SINR_Expression_MF}) therefore reduces to the standard MF output SNR at the matched bin, while the sidelobe behavior is separately characterized by $\mathbb{E}[|r_k|^2]$ for $k\neq 0$.}}
\end{remark}

\subsubsection{Reciprocal Filtering}
{{Similar to the MF above, RF is also implemented in the frequency domain \cite{Wojaczek2019Reciprocal}.}} Given the \(N\)-point DFTs of the transmitted and received signals, denoted by \(X[n]\) and \(Y[n]\), respectively, RF performs element-wise reciprocal weighting as follows:\footnote{We assume $X[n]\neq 0$ almost surely and implement reciprocal filtering via a Tikhonov-regularized division to avoid blow-ups when $|X[n]|$ is small:
	$
	\tilde{Y}^{\mathrm{RF}}[n]=\frac{Y[n]\,X[n]^*}{|X[n]|^2+\varepsilon},\varepsilon>0.
	$ {{Here, $\varepsilon$ is an implementation-level regularization parameter for avoiding numerical instability in reciprocal filtering.}}}
\begin{equation} \vspace{0mm}\label{RFProcess}
	\tilde{Y}^{\mathrm{RF}}[n] = {Y[n]} / {X[n]}, \quad n = 0, \ldots, N-1.
\vspace{0mm} \end{equation}
The time-domain output is then obtained by inverse DFT, i.e.,
\begin{equation} \vspace{0mm}
	\tilde{y}^{\mathrm{RF}}[t] = \frac{1}{\sqrt{N}} \sum_{n=0}^{N-1} \tilde{Y}^{\mathrm{RF}}[n] e^{j \frac{2\pi}{N} n t}.
\vspace{0mm} \end{equation}
Substituting (\ref{RF_Y}) into the RF procedure (\ref{RFProcess}) yields
{{\begin{equation} \vspace{0mm}\label{RFNoise}
			\tilde{y}^{\mathrm{RF}}[t]
			=
			\sum_{q=1}^{Q}\alpha_q
			\underbrace{\frac{1}{\sqrt{N}}\sum_{n=0}^{N-1}e^{j\frac{2\pi}{N}n(t-\tau_q)}}_{\displaystyle \sqrt{N}\,\delta_N[t-\tau_q]}
			+\tilde{z}^{\mathrm{RF}}[t],
			\vspace{0mm} \end{equation}
		where \(\delta_N[\ell]\) denotes the periodic Kronecker delta, i.e.,
		$
		\delta_N[\ell]=
		\begin{cases}
			1, & \ell=0 \ (\mathrm{mod}\ N),\\
			0, & \text{otherwise}.
		\end{cases}
		$
		Therefore, under the adopted integer-delay model, the noise-free RF output consists of impulses located exactly at the corresponding delay bins.
}} In (\ref{RFNoise}), \(\tilde{z}^{\mathrm{RF}}[t]\) is the filtered noise shaped by \(Z[n]/X[n]\). Hence, the RF-based SNR at the \(q\)-th target bin can be written as
\begin{equation} \vspace{0mm}
	\mathrm{SNR}_q^{\mathrm{RF}} = \frac{|\alpha_q|^2 N}{\mathbb{E}[|\tilde{z}^{\mathrm{RF}}[\tau_q]|^2]}.
	\vspace{0mm} \end{equation}
It can be observed that unlike the SINR expression for MF, $\tilde{y}^{\mathrm{RF}}[t]$ does not include sidelobe interference terms in the denominator.
This is because RF performs a symbol-wise division by the known communication symbols, effectively equalizing the frequency-domain response of each subcarrier.
However, the symbol-wise division inherent in RF alters the noise statistics, which can increase the effective noise variance. Specifically, in (\ref{RFNoise}), 
$\tilde{z}^{\mathrm{RF}}[t]$ is the spectrally shaped noise introduced by dividing $Y[n]$ by $X[n]$ in the frequency domain. 

\subsubsection{Matrix Pencil Delay Estimation}
{{Conventional delay estimation based on peak picking of the sensing-filter output is fundamentally limited by the sampling grid. For example, with a bandwidth of $B=20$ MHz, the nominal range resolution is only about $7.5$ m, which motivates the use of super-resolution methods for sub-bin delay estimation.}}
The Matrix Pencil (MP) method \cite{sarkar1995using} enables super-resolution by exploiting the phase progression of complex exponentials embedded in structured data matrices.
{{Specifically, based on the discussion in \ref{RangeEstimation}, let $e[n] = \tilde{Y}^{\mathrm{MF}}[n]$ for matched filtering and  $e[n] = \tilde{Y}^{\mathrm{RF}}[n]$  for reciprocal filtering.}}
Build Hankel matrices of size $L\times K$ with $L=N-K$ and $K\approx N/2$:
\begin{align}
	E_1 &=
	\begin{bmatrix}
		e[0] & e[1] & \cdots & e[K-1]\\
		e[1] & e[2] & \cdots & e[K]\\
		\vdots & \vdots & \ddots & \vdots\\
		e[L-1] & e[L] & \cdots & e[L+K-2]
	\end{bmatrix}, \quad
\end{align}
\begin{align}
	E_2 & =
	\begin{bmatrix}
	e[1] & e[2] & \cdots & e[K]\\
	e[2] & e[3] & \cdots & e[K+1]\\
	\vdots & \vdots & \ddots & \vdots\\
	e[L] & e[L+1] & \cdots & e[L+K-1]
	\end{bmatrix}.\quad
\end{align}
Choose $Q<\min\{K,L\}$ and compute the rank-$Q$ truncated SVD $E_1\approx U\Sigma V^H$ with
$U\in\mathbb C^{L\times Q}$, $\Sigma\in\mathbb R^{Q\times Q}$, $V\in\mathbb C^{K\times Q}$.
Solve
\begin{equation} \vspace{0mm}
	\Sigma^{-1} U^H E_2 V\,\mathbf v_q \;=\; \lambda_q\,\mathbf v_q,
\vspace{0mm} \end{equation}
which ideally yields $\lambda_q = e^{-j\frac{2\pi}{N}\tau_q}$. The delays are then
\begin{equation} \vspace{0mm}
	\hat \tau_q \;=\; -\,\frac{N}{2\pi}\,\arg(\lambda_q) \;\bmod N,\qquad q=1,\ldots,Q.
\vspace{0mm} \end{equation}
{{A higher post-processing output SINR generally leads to a lower delay estimation variance for both super-resolution and peak-based estimators, which explains why the derived MF/RF sensing metrics correlate positively with the ranging accuracy observed in our results \cite{stoica1989music}.}}

\subsection{Communication Model and Performance Metrics}

{{Following communication standards, the system supports downlink communication by embedding information within the frequency-domain symbols $S[n]$. Each subcarrier employs one constellation selected from the candidate set $\{\mathcal{C}_j\}_{j=1}^J$ defined in Section \ref{SignalModel}. For each candidate constellation $\mathcal{C}_j$, its cardinality is denoted by $|\mathcal{C}_j|$, and the corresponding uncoded bits per symbol is $\log_2 |\mathcal{C}_j|$.}}
Let $\chi_{n,j}\in\{0,1\}$ indicate that subcarrier $n$ uses constellation $j\in\{1,\dots,J\}$. Then, the aggregate number of raw bits per OFDM symbol across all users is given by:
\begin{equation} \vspace{0mm}
	R_{\mathrm{com}}=\sum_{n=0}^{N-1}\sum_{j=1}^J \chi_{n,j} \log_2 {|\mathcal C_j|}.
\vspace{0mm} \end{equation}
{{To ensure reliable transmission, the bit error rate (BER) on each subcarrier must remain below a predefined threshold. Accordingly, for each subcarrier \(n\), the BER associated with the selected constellation must satisfy
}}
\begin{equation} \vspace{0mm}
	\sum_{j=1}^J \chi_{n,j} \mathrm{BER}_j\big(\gamma_n\big) \le \mathrm{BER}_{\mathrm{th}}, \forall n,
\vspace{0mm} \end{equation}
{{}{where $ \gamma_n=\frac{|H_n|^2 P_n}{N_0\Delta f}$, and $H_n$ is the complex channel gain on subcarrier $n$. We denote 
$N_0$ as the single-sided noise power spectral density, and $\Delta f$ as subcarrier spacing. Therefore, the communication BER constraint is imposed through the effective per-subcarrier SNR $\gamma_n=|H_n|^2P_n/(N_0\Delta f)$.
More specifically, under AWGN with coherent detection and Gray mapping, BER admits closed forms, e.g., for QPSK $\mathrm{BER}_{\text{QPSK}}(\gamma)=Q(\sqrt{\gamma})$ and for square $M$-QAM $\mathrm{BER}_{\text{QAM}}(\gamma)\approx \tfrac{4}{\log_2 M}\!\left(1-\tfrac{1}{\sqrt{M}}\right)\!Q\!\big(\sqrt{3\gamma/(M-1)}\big)$, where $\gamma$ is the per-subcarrier symbol SNR.
}} Alternatively, we tabulate BER-SNR curves per constellation (including APSK and coded MCS) and apply monotone interpolation/inversion to obtain $\gamma_j^{\min}$ and thus $P_{n,j}^{\min}$ for a given BER target.

\subsection{Problem Formulation}

Communication-centric ISAC entails a communication performance guarantee while maximising sensing performance. Since the optimization ultimately targets sidelobe suppression rather than the performance of any specific target, the formulation does not depend on individual target characteristics. We jointly optimize constellation selection and power allocation over $N$ subcarriers to maximize sensing performance while guaranteeing communication requirements:
\begin{subequations}\label{prob:P1}
	\begin{align}
		(\mathrm{P1})\quad
		\max_{{\bm{\chi}},\mathbf P} \quad 
		& \Gamma_s({\bm{\chi}},\mathbf P)  \label{P1:obj}\\
		\text{s.t.}\quad 
		& \frac{1}{N} R_{\mathrm{com}} \ge R_{\min}, \label{P1:rate}\\
		& \sum_{j=1}^J \chi_{n,j}\,\mathrm{BER}_j\big(\gamma_n\big) \le \mathrm{BER}_{\mathrm{th}}, \forall n, \label{P1:ber}\\
		& \frac{1}{N}	\sum_{n=0}^{N-1} P_n  = P_{\text{ave}},  P_n \ge 0, \forall n, \label{P1:power}\\
		& \sum_{j=1}^{J} \chi_{n,j} = 1, \chi_{n,j}\in\{0,1\}, \forall n,j. \label{P1:select}
	\end{align}
\end{subequations}
where $\Gamma_s({\bm{\chi}},\mathbf P)\in\{\mathrm{SINR}_q^{\mathrm{MF}},\,\mathrm{SNR}_q^{\mathrm{RF}}\}$ denotes the output SINR after sensing receive filtering. Here, (\ref{P1:rate}) and (\ref{P1:ber}) respectively represent communication spectral efficiency constraint and communication BER constraint. (\ref{P1:power}) ensures the total power constraint of all subcarriers. {{Constraint \eqref{P1:select} is a one-hot assignment constraint, which ensures that each subcarrier can select one and only one constellation from the candidate set.}}\footnote{{It is noted that the present communication-centric formulation assumes that all subcarriers are active. For any selected constellation on an active subcarrier, the BER constraint induces a strictly positive power lower bound. Therefore, zero-power active-subcarrier allocation is not considered in this work and is left as a possible extension for future study.}}
		The resulting mixed-integer nonlinear programming (MINLP) problem explicitly captures the sensing-communication tradeoff, as the choice of constellation and power on each subcarrier jointly affects both the communication rate and the sensing SINR of MF (SNR of RF) through the fourth-order moment $\mu_{4,n}$ (the inverse second moment $\nu_{-2,n}$) and the BER expression.

\begin{remark}
	{{The adopted post-filter SINR/SNR in (P1) is a receiver-output-level sensing-quality metric, rather than an estimator-specific or detection-specific objective. It can serve as a fundamental surrogate for multiple sensing tasks, since improved output SINR is generally associated with lower range estimation error and higher detection probability under standard estimation and detection settings.}}
\end{remark}

\section{Analytical Study of the Sensing Performance}\label{SensingExpression}

In this section, we systematically investigate the closed-form expressions of sensing SINR output as a function of the employed constellation and the allocated power, in the considered ISAC system.

\subsection{SINR Expression for Matched Filtering}
Since \(\mathbf{x}\) depends on the random transmitted symbols \(\{S[n]\}\), \({r}_k\) is a random process. The average system performance of MF is thus highly determined via the expected autocorrelation power
$
\mathbb{E}[|{r}_k|^2],
$
which provides insight into average sidelobe levels and range resolution \cite{Liu2025Uncovering}. 
In (\ref{SINR_Expression_MF}), 
the relative delays 
\(\Delta_{qi} \triangleq (\tau_q - \tau_i) \bmod N\) 
are unknown and can be modeled as random, approximately uniform over 
\(\{1, \dots, N-1\}\). 
Hence, the expected autocorrelation power over nonzero delays is
{{\begin{equation} \vspace{0mm}
	\mathbb{E}_{\Delta}\!\Big[\,\mathbb{E}\left[|r_{\Delta}|^2\right]\,\Big]
	= \frac{1}{N-1} \sum_{k=1}^{N-1} \mathbb{E}[|r_k|^2].
	\vspace{0mm} \end{equation}}}
To this end, the expected clutter interference caused by sidelobe of ACFs can be given by $I_q \approx \big(\sum_{i\neq q}|\alpha_i|^2\big)\,\mathrm{ESL}$.
Then, we define an important component, the \emph{expected sidelobe level} (ESL), given by
\begin{equation} \vspace{0mm}
	\mathrm{ESL} = \frac{1}{N-1} \sum_{k=1}^{N-1} \mathbb{E}\left[ |r_k|^2 \right].
\vspace{0mm} \end{equation}
Physically, the ESL can be interpreted as the level of matched-filter output interference at the target location caused by other interfering scatterers in the presence of clutter.
We next analyze the sensing performance expressions for both single-symbol and multi-symbol cases, highlighting the coherent integration gain achievable in each scenario.

\subsubsection{Single OFDM Symbol}\label{SingleSymbol}

We first analyze the sensing SINR for the case of a single OFDM symbol. For the zero-lag term $k=0$, we have
$
r_0=\sum_{n=0}^{N-1} P_n |S[n]|^2,
$ denoting the peak of the auto-correlation results.
Using independence across subcarriers and $\mathbb{E}[|S[n]|^2]=1$, it follows that
\begin{equation} \vspace{0mm}
	\begin{aligned}
		\mathbb{E}\!\left[|r_0|^2\right]
		&\!=\!\! \sum_{n=0}^{N-1} P_n^2\,\mathbb{E}\!\left[|S[n]|^4\right]
		\!+\!\!\! \sum_{\substack{n,n'=0\\ n\neq n'}}^{N-1}\! \! P_n P_{n'}\mathbb{E}\!\left[|S[n]|^2\right] \! \mathbb{E}\!\left[|S[n']|^2\right] \\
		&= \sum_{n=0}^{N-1} P_n^2\,(\mu_{4,n}-1) + \Bigl(\sum_{n=0}^{N-1} P_n\Bigr)^2.
	\end{aligned}
	\label{eq:E_r0_general}
\vspace{0mm} \end{equation}
In (\ref{eq:E_r0_general}),  under equal (or not-too-peaky) power allocation, the relative contribution of the first term is very small for large $N$. In particular, for constant-modulus constellations (e.g., PSK) where $\mu_{4,n} = 1$, the first term vanishes exactly. Therefore, we simplify $\mathbb{E}\left[|r_0|^2\right]$ as $\Bigl(\sum_{n=0}^{N-1} P_n\Bigr)^2$ in the following optimization design.

\begin{Pro}\label{EISLexpression}
	The expected sidelobe level can be given by
	\begin{equation} \vspace{0mm}\label{ESLexpression}
			\mathrm{ESL} 
			=  \frac{1}{N-1} \left( \sum_{n=0}^{N-1}[(N-1)\mu_{4,n} + 1]P_n^2
			\;-\;\Bigl(\sum_{n=0}^{N-1}P_n\Bigr)^2\right).
	\vspace{0mm} \end{equation}
\end{Pro}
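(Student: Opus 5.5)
The plan is to move to the frequency domain, where the periodic ACF becomes a weighted sum of exponentials, compute $\mathbb{E}[|r_k|^2]$ there, and sum over all lags. Summing over \emph{all} lags $k=0,\dots,N-1$ is easy by orthogonality of the exponentials. We then remove the zero-lag term, which is already available from (\ref{eq:E_r0_general}).

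\textbf{Step 1 (diagonalize the cyclic shift).} Since $\mathbf{J}_k$ is circulant, the unitary DFT diagonalizes it: $\mathbf{F}_N\mathbf{J}_k\mathbf{F}_N^H=\mathrm{diag}\big(e^{\pm j\frac{2\pi}{N}nk}\big)_{n}$. The sign depends on the shift convention and is irrelevant for what follows. Substituting $\mathbf{x}=\mathbf{F}_N^H\mathbf{P}\mathbf{S}$ gives
\begin{equation*}
r_k=\sum_{n=0}^{N-1}P_n|S[n]|^2 e^{\pm j\frac{2\pi}{N}nk}.
\end{equation*}
Getting this sign and normalization right (unitary DFT, so no stray factor of $N$) is the only point where I expect care is needed. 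A quick sanity check is that $k=0$ reproduces $r_0=\sum_n P_n|S[n]|^2$ as stated in the paper.

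\textbf{Step 2 (second moment per lag).} Expanding the modulus squared gives
\begin{equation*}
\mathbb{E}[|r_k|^2]=\sum_{n,n'}P_nP_{n'}\,\mathbb{E}\big[|S[n]|^2|S[n']|^2\big]\,e^{\pm j\frac{2\pi}{N}(n-n')k}.
\end{equation*}
By independence across subcarriers and unit power, the expectation equals $\mu_{4,n}$ when $n=n'$, and $1$ when $n\neq n'$. Here we use that, for zero-mean unit-power symbols, the kurtosis reduces to $\mu_{4,n}=\mathbb{E}[|S[n]|^4]$.

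\textbf{Step 3 (sum over lags).} Summing over $k=0,\dots,N-1$, the identity $\sum_k e^{\pm j\frac{2\pi}{N}(n-n')k}=N\delta_N[n-n']$ kills all cross terms. This leaves
\begin{equation*}
\sum_{k=0}^{N-1}\mathbb{E}[|r_k|^2]=N\sum_{n}\mu_{4,n}P_n^2.
\end{equation*}
Subtracting the exact zero-lag value from (\ref{eq:E_r0_general}), namely $\sum_n(\mu_{4,n}-1)P_n^2+(\sum_nP_n)^2$, gives
\begin{equation*}
\sum_{k=1}^{N-1}\mathbb{E}[|r_k|^2]=\sum_n\big[(N-1)\mu_{4,n}+1\big]P_n^2-\Big(\sum_nP_n\Big)^2.
\end{equation*}
Dividing by $N-1$ yields (\ref{ESLexpression}).

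Note that the exact form of $\mathbb{E}[|r_0|^2]$ must be used here, not the simplified $(\sum_n P_n)^2$. The "$+1$" inside the bracket comes precisely from the $-(\mu_{4,n}-1)$ correction.
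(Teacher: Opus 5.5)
Your proof is correct and follows essentially the same route as the paper's Appendix A. Both arguments use the frequency-domain form $r_k=\sum_n P_n|S[n]|^2e^{\pm j\frac{2\pi}{N}nk}$, split $\mathbb{E}[|r_k|^2]$ into diagonal and cross terms, and apply orthogonality of the exponentials over a full period (Parseval); the only difference is bookkeeping, since you sum over all $N$ lags and subtract the exact $\mathbb{E}[|r_0|^2]$, whereas the paper sums over $k=1,\dots,N-1$ directly and subtracts only the deterministic term $\bigl(\sum_n P_n\bigr)^2$ after Parseval, which gives the same result.
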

\begin{proof}
Please refer to Appendix A.
\end{proof}

Based on Proposition \ref{EISLexpression}, the SINR of sensing receiver signal after receive filtering can be given by
\begin{equation} \vspace{0mm}\label{SINRMFexpression}
	\mathrm{SINR}_q^{\mathrm{MF}}=\frac{|\alpha_q|^2 \mathbb{E}[|{r}_0|^2]}{\sum_{i \neq q} |\alpha_i|^2 {\mathrm{ESL}}+{\sigma_z^2}\sum_{n=0}^{N-1}P_n}.
\vspace{0mm} \end{equation}
Based on (\ref{SINRMFexpression}), the SINR is maximized when the kurtosis equals 1.  Therefore, to enhance performance, one should select as many subcarriers as possible with kurtosis equal to 1, such as those modulated using constant-modulus schemes like QPSK. As evident from (\ref{ESLexpression}), the ESL is invariant to any permutation of the subcarrier constellation order, enabling us to exploit this property for sensing and communication optimization. {{Importantly, since the reflection coefficient $\alpha_q$ is modeled as a zero-mean complex Gaussian random variable, its power $|\alpha_q|^2$ follows an exponential distribution with mean $\mathbb{E}[|\alpha_q|^2]=\sigma_{\alpha_q}^2$. In the following performance analysis, we therefore consider the average SINR with respect to the target reflection coefficient. Consequently, $|\alpha_q|^2$ can be replaced by its expected value $\sigma_{\alpha_q}^2$. 
		Moreover, since this term appears only as a multiplicative constant that is independent of the optimization variables $({\bm{\chi}},\mathbf P)$, it does not affect the maximization of the objective function and can therefore be omitted in the optimization.}}

\subsubsection{Multi-Symbol Coherent Processing Gain}

{{Building on the multi-symbol signal model introduced in Section II, we now analyze the coherent processing gain obtained by averaging over \(M\) independent OFDM symbols within a short interval where the target responses remain approximately stationary.}}
The averaged ACF at lag $k$ over $M$ symbols is defined as
\begin{equation} \vspace{0mm}
	\overline{r}_k = \frac{1}{M}\sum_{m=1}^{M} {r}^m_k  =\frac{1}{M}\sum_{m=1}^{M}\sum_{n=0}^{N-1}P_n\, |S^{(m)}[n]|^2 \, e^{j\frac{2\pi}{N} n k},
	\vspace{0mm} \end{equation}
where $ {r}^m_k$ denotes the ACF of the $m$th realization. 
\begin{theorem}\label{TheoremSINRcoherent}
	For coherent processing of $M$ independent OFDM symbols, the SINR at the MF output for the $q$-th target is
	\begin{equation} \vspace{0mm}\label{SINRMFexpressionMultiframe}
		\mathrm{SINR}_q^{\mathrm{MF}} = 
		\frac{\sigma_{\alpha_q}^2 \mathbb{E}[|\overline{r}_0|^2]}
		{\frac{\sum_{i \neq q} \sigma_{\alpha_i}^2}{N-1}\sum_{k=1}^{N-1}\mathbb{E}[|\overline{r}_k|^2] 
			+ \frac{\sigma_z^2}{M}\sum_{n=0}^{N-1}P_n},
	\vspace{0mm} \end{equation}
	where
	$
	\mathbb{E}[|\overline{r}_0|^2] 
	= \frac{1}{M}\sum_{n=0}^{N-1}P_n^2(\mu_{4,n}-1) + \left(\sum_{n=0}^{N-1}P_n\right)^2,
	$
	and
	$
	\sum_{k=1}^{N-1} \mathbb{E}[|\overline{r}_k|^2] 
	= \frac{N-1}{M}\sum_{n=0}^{N-1}P_n^2(\mu_{4,n}-1) + N\sum_{n=0}^{N-1}P_n^2 - \left(\sum_{n=0}^{N-1}P_n\right)^2.
	$
\end{theorem}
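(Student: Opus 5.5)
The plan is to compute the second moments of the averaged autocorrelation $\overline{r}_k$ directly, then assemble the SINR by reusing the single-symbol structure of (\ref{SINRMFexpression}). Write $\overline{r}_k=\sum_{n}P_n W_n e^{j\frac{2\pi}{N}nk}$ with $W_n\triangleq\frac{1}{M}\sum_{m=1}^M|S^{(m)}[n]|^2$. The $W_n$ are independent across $n$, since the symbols are independent across subcarriers and across $m$. Their moments follow from the unit-power and kurtosis definitions: $\mathbb{E}[W_n]=1$, and because the $|S^{(m)}[n]|^2$ are i.i.d. across $m$ with second moment $\mu_{4,n}$,
\begin{equation*}
\mathbb{E}[W_n^2]=\frac{1}{M}\mu_{4,n}+\Bigl(1-\frac{1}{M}\Bigr)=1+\frac{\mu_{4,n}-1}{M}.
\end{equation*}
So the only change from the single-symbol case is that $\mu_{4,n}-1$ is replaced by $(\mu_{4,n}-1)/M$. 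The reduced variance of $W_n$ is exactly the coherent-integration gain.

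Next we expand $\mathbb{E}[|\overline{r}_k|^2]=\sum_{n,n'}P_nP_{n'}\mathbb{E}[W_nW_{n'}]e^{j\frac{2\pi}{N}(n-n')k}$ and split it into diagonal and off-diagonal parts. This gives
\begin{equation*}
\mathbb{E}[|\overline{r}_k|^2]=\sum_n P_n^2\frac{\mu_{4,n}-1}{M}+\Bigl|\sum_n P_n e^{j\frac{2\pi}{N}nk}\Bigr|^2.
\end{equation*}
At $k=0$ this immediately yields the stated $\mathbb{E}[|\overline{r}_0|^2]$, mirroring (\ref{eq:E_r0_general}). For the sidelobe sum, we sum over $k=1,\dots,N-1$. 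The first term contributes $(N-1)\sum_nP_n^2(\mu_{4,n}-1)/M$. For the second term, Parseval/orthogonality gives $\sum_{k=0}^{N-1}|\sum_nP_ne^{j\frac{2\pi}{N}nk}|^2=N\sum_nP_n^2$, and removing the $k=0$ term $(\sum_nP_n)^2$ gives the claimed expression. As a consistency check, setting $M=1$ and dividing by $N-1$ should reproduce Proposition~\ref{EISLexpression}.

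For the noise term, we take the averaged MF output $\frac{1}{M}\sum_m\tilde{y}^{\mathrm{MF},(m)}[\tau_q]$, whose noise part is $\frac{1}{M}\sum_m\mathbf{x}^{(m)H}\mathbf{J}\mathbf{z}^{(m)}$. Conditional on the symbols, the $M$ terms are independent zero-mean Gaussians with variance $\sigma_z^2\|\mathbf{x}^{(m)}\|^2$. Taking expectations with $\mathbb{E}\|\mathbf{x}^{(m)}\|^2=\sum_nP_n$ (unitary IDFT) gives $\frac{\sigma_z^2}{M}\sum_nP_n$. The cross terms between noise and target echoes vanish because $\mathbf{z}$ is zero-mean and independent of everything else. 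The cross terms among different targets vanish because the $\alpha_i$ are independent and zero-mean. Averaging $|\alpha_i|^2$ to $\sigma_{\alpha_i}^2$ and replacing the unknown relative delays by a uniform average over nonzero lags, exactly as in the ESL definition, yields the interference term $\frac{\sum_{i\neq q}\sigma_{\alpha_i}^2}{N-1}\sum_{k\ge1}\mathbb{E}[|\overline{r}_k|^2]$.

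The main obstacle is not algebraic. It is making the decoupling in the SINR ratio legitimate, namely that the expectation of $|\alpha_i\overline{r}_{\Delta}|^2$ factorizes as $\sigma_{\alpha_i}^2\,\mathbb{E}_\Delta\mathbb{E}[|\overline{r}_\Delta|^2]$. I would handle this by invoking the independence of $\alpha_i$ from the symbols and the uniform-delay modeling adopted before Proposition~\ref{EISLexpression}, and by following the same ratio-of-expectations convention as (\ref{SINRMFexpression}). The moment computation of $W_n$ is the only genuinely new ingredient.
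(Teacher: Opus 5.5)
Your proposal is correct and follows essentially the same route as the paper's Appendix B: your computation $\mathbb{E}[W_n^2]=1+(\mu_{4,n}-1)/M$ is the subcarrier-level form of the paper's argument. The paper instead keeps the mean $\sum_n P_n e^{j\frac{2\pi}{N}nk}$ of $\overline{r}_k$ fixed and scales its variance as $\mathrm{Var}(\overline{r}_k)=\mathrm{Var}(r_k^{(m)})/M$, and both routes then apply Parseval over the nonzero lags. Beyond the paper, you derive explicitly the $\frac{\sigma_z^2}{M}\sum_n P_n$ noise term and the factorization of the clutter term, which the paper leaves to ``substituting these into the SINR definition.''
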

\begin{proof}
	Please refer to Appendix B.
\end{proof}

These expressions explicitly demonstrate how coherent processing (increasing $M$) reduces random sidelobes and noise effects, leaving structural sidelobes unchanged, thus guiding effective waveform optimization.

\subsection{SNR Analysis for Reciprocal Filtering}
To quantify the performance of reciprocal filtering, we derive its theoretical SNR by analyzing the noise statistics after symbol-wise division in the frequency domain.

\begin{theorem}\label{TheoremSINRcoherentRF}
	For coherent processing over $M$ independent OFDM symbols, the theoretical SNR for reciprocal filtering is
	\begin{equation} \vspace{0mm}
		\mathrm{SNR}_q^{\mathrm{RF}} = \frac{\sigma_{\alpha_q}^2 M N^2}{ {\sigma_z^2}{} \sum_{n=0}^{N-1} \frac{\nu_{-2,n}}{P_n} }.
	\vspace{0mm} \end{equation}
\end{theorem}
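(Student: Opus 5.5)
We will compute the signal and noise terms of the $M$-symbol averaged reciprocal-filter output directly from the frequency-domain model \eqref{RF_Y}. For the $m$-th symbol the RF output on subcarrier $n$ is
\begin{equation*}
\tilde{Y}^{\mathrm{RF},(m)}[n]=\sum_{q}\alpha_q e^{-j\frac{2\pi}{N}n\tau_q}+\frac{Z^{(m)}[n]}{\sqrt{P_n}\,S^{(m)}[n]} .
\end{equation*}
We take the coherent output as the average $\overline{y}^{\mathrm{RF}}[t]=\frac{1}{M}\sum_{m=1}^{M}\tilde{y}^{\mathrm{RF},(m)}[t]$. Since the targets are quasi-static, the deterministic part is unchanged by this averaging. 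By \eqref{RFNoise}, its value at $t=\tau_q$ is $\alpha_q\sqrt{N}$, so the signal power is $\sigma_{\alpha_q}^2 N$ after averaging over $\alpha_q$. There are no sidelobe terms, so only the noise variance remains to be evaluated.

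The noise at bin $t$ is
\begin{equation*}
\overline{z}^{\mathrm{RF}}[t]=\frac{1}{M\sqrt{N}}\sum_{m=1}^{M}\sum_{n=0}^{N-1}\frac{Z^{(m)}[n]}{\sqrt{P_n}S^{(m)}[n]}e^{j\frac{2\pi}{N}nt}.
\end{equation*}
Because the DFT is unitary, $Z^{(m)}[n]$ is i.i.d. $\mathcal{CN}(0,\sigma_z^2)$ across $n$ and $m$, and it is independent of the symbols. Conditioning on the symbols, all cross terms with $(n,m)\neq(n',m')$ vanish because the $Z$'s are zero-mean and independent. Each diagonal term contributes $\sigma_z^2/(P_n|S^{(m)}[n]|^2)$. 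Taking the expectation over the symbols (using the $\varepsilon$-regularization footnote, or simply assuming $\nu_{-2,n}<\infty$) replaces $\mathbb{E}[|S^{(m)}[n]|^{-2}]$ by $\nu_{-2,n}$, which is the same for every $m$. This gives
\begin{equation*}
\mathbb{E}\big[|\overline{z}^{\mathrm{RF}}[\tau_q]|^2\big]=\frac{1}{M^2N}\cdot M\sigma_z^2\sum_{n}\frac{\nu_{-2,n}}{P_n}=\frac{\sigma_z^2}{MN}\sum_n\frac{\nu_{-2,n}}{P_n},
\end{equation*}
which is independent of $t$.

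Dividing the signal power by this noise power gives $\sigma_{\alpha_q}^2 N\cdot MN/(\sigma_z^2\sum_n \nu_{-2,n}/P_n)$, which is the stated expression. The main point needing care is the exact averaging convention. Averaging rather than summing leaves the signal amplitude fixed and reduces the noise variance by $1/M$, consistent with the $\sigma_z^2/M$ factor in Theorem~\ref{TheoremSINRcoherent}, so this is the convention we use. Summing instead would scale signal and noise together and give the same ratio. A secondary point is justifying $X[n]\neq0$ almost surely and finiteness of $\nu_{-2,n}$. This holds for all standard finite constellations (QAM/PSK/APSK), which exclude the origin, so no further argument beyond noting this is needed.
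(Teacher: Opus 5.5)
Your proposal is correct and follows essentially the same route as the paper's Appendix C. Dividing by $X^{(m)}[n]$ leaves an impulse of amplitude $\alpha_q\sqrt{N}$ at the target bin plus noise $Z^{(m)}[n]/X^{(m)}[n]$, whose symbol-averaged variance is $\sigma_z^2\nu_{-2,n}/P_n$; averaging over $M$ symbols and applying the unitary IDFT then gives noise variance $\frac{\sigma_z^2}{MN}\sum_n \nu_{-2,n}/P_n$, and the ratio is the stated SNR. The only difference is presentational: you handle the $M$-symbol average in one pass with explicit cross-term cancellation, whereas the paper first treats $M=1$ and then scales the noise variance by $1/M$.
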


\begin{proof}
		Please refer to Appendix C.
\end{proof}

From the above analysis, it is evident that increasing the number of coherently processed symbols leads to a monotonic improvement in effective SNR.  
This enhancement significantly improves ranging accuracy when detecting quasi-static targets.  
However, for non-static targets, the number of coherent symbols should be carefully selected based on the target's velocity to avoid additional interference caused by target movement.

\section{Proposed Algorithm Design and Performance Analysis}\label{ProposedAlgorithm}

\subsection{Sensing-Optimal Performance}
When communication performance requirements are disregarded, the problem reduces to maximizing sensing performance alone.  
It is straightforward to observe that when the modulation constellation satisfies \(\mu_{4,n} = 1\) and \(\nu_{-2,n} = 1\), both interference term of MF and noise term of RF are minimized.  
In the general case, where the constellation does not have these ideal properties, the optimal power allocation can be analyzed as follows.

Based on the analysis in Section \ref{SingleSymbol}, we can simplify the expression of SINR of MF results, and thus minimize the ESL instead of maximizing SINR, i.e., 
$
\tilde{\mathrm{SINR}}_q^{\mathrm{MF}} \approx 
\frac{\sigma_{\alpha_q}^2 N^2 P_{\text{ave}}^2}
{\frac{\sum_{i \neq q} \sigma_{\alpha_i}^2}{N-1}\left(\frac{N-1}{M}\sum_{n=0}^{N-1}P_n^2(\mu_{4,n}-1) + N\sum_{n=0}^{N-1}P_n^2 - N^2 P_{\text{ave}}^2\right)
	+ \frac{\sigma_z^2}{M} N P_{\text{ave}}}
$. 

\begin{table}[!t]
\centering
\footnotesize
\caption{Unified Processing and Power-Allocation Rules}
\label{table1}
{{\begin{tabular}{|p{1.8cm}|c|c|}
	\hline
	\centering\textbf{Scheme} 
	& \textbf{Processing}
	& \textbf{Optimal power allocation} \\ \hline
	\makecell[c]{Matched\\filtering}
	&  $\tilde{Y}^{\mathrm{MF}}[n] \leftarrow Y[n]X[n]^*$
	& \makecell{
		$P_n = \dfrac{N P_{\text{ave}}}{b_n \sum_{i=0}^{N-1} b_i^{-1}},$ \\ 
		$b_n =\!  \dfrac{\mu_{4,n} \! -\!  1}{M} \! + \!  \dfrac{N}{N\! -\! 1}$
	} \\ \hline
	\makecell[c]{Reciprocal\\filtering}
	& $\tilde{Y}^{\mathrm{RF}}[n] \leftarrow Y[n] / X[n]$
	& $P_n = \dfrac{N P_{\text{ave}} \sqrt{\nu_{-2,n}}}{\sum_{i=0}^{N-1} \sqrt{\nu_{-2,i}}}$ \\ \hline
\end{tabular}}}
\end{table}

\begin{Pro}\label{thm:eqpower_subset_opt}
	Let $\mathcal{I}\subseteq\{0,\ldots,N-1\}$ with {{$\nu_{-2,n}=\nu_{-2}$}} and $\mu_{4,n}=\mu_4$ for all $n\in \mathcal{I}$.
	Then the optimal allocation satisfies $P_n=P_{n'}$ for all $n,n'\in \mathcal{I}$.
\end{Pro}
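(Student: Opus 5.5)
The plan is an exchange (symmetrization) argument that uses the structure of the two sensing objectives derived earlier. Fix the constellation assignment and consider the power vector $\mathbf P$ only. Under this assignment, Theorem~\ref{TheoremSINRcoherent} and the simplification adopted in the sensing-optimal analysis give the MF objective in a convenient form. The numerator is fixed by the power constraint, $\bigl(\sum_n P_n\bigr)^2 = N^2P_{\text{ave}}^2$, and so is the noise term. Maximizing $\tilde{\mathrm{SINR}}_q^{\mathrm{MF}}$ is therefore equivalent to minimizing the separable convex quadratic $f_{\mathrm{MF}}(\mathbf P)=\sum_{n} b_n P_n^2$, with $b_n=\frac{\mu_{4,n}-1}{M}+\frac{N}{N-1}>0$. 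By Theorem~\ref{TheoremSINRcoherentRF}, maximizing $\mathrm{SNR}_q^{\mathrm{RF}}$ is equivalent to minimizing $f_{\mathrm{RF}}(\mathbf P)=\sum_n \nu_{-2,n}/P_n$, which is strictly convex on $P_n>0$. In both cases the objective has the form $\sum_n g_n(P_n)$, with $g_n$ strictly convex, and $g_n=g$ is the same function for all $n\in\mathcal I$, because $\mu_{4,n}$ and $\nu_{-2,n}$ are constant there.

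Next, the constraints. The power constraint \eqref{P1:power} is linear, and the BER constraint \eqref{P1:ber} becomes a lower bound $P_n\ge P^{\min}_{n}$ once the constellation is fixed. The rate constraint does not involve $\mathbf P$ at all. Equal statistics on $\mathcal I$ do not by themselves make the BER thresholds equal, because $\gamma_n$ depends on $|H_n|^2$. I would therefore state the result in the setting where these lower bounds are common across $\mathcal I$, or are inactive there. This covers flat fading and the sensing-optimal case with no communication requirements, which is where the proposition is placed.

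The core step is averaging. Take any optimal $\mathbf P^\star$ and define $\bar{\mathbf P}$ by replacing $P_n^\star$, for $n\in\mathcal I$, with the mean $\bar P=\frac{1}{|\mathcal I|}\sum_{n\in\mathcal I}P^\star_n$, leaving the other entries unchanged. The total power is preserved. Each averaged entry is at least the minimum of the originals, so it still meets the common lower bound, and $\bar{\mathbf P}$ is feasible. Jensen's inequality applied to $g$ gives $\sum_{n\in\mathcal I}g(P^\star_n)\ge |\mathcal I|\,g(\bar P)$, with strict inequality unless all $P^\star_n$ coincide. If the $P^\star_n$ were not all equal, $\bar{\mathbf P}$ would strictly improve the objective and contradict optimality. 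Hence every optimal allocation has equal powers on $\mathcal I$.

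As a cross-check, the KKT conditions give the same conclusion. For MF, stationarity reads $2b_nP_n=\lambda$, and for RF it reads $\nu_{-2,n}/P_n^2=\lambda$ at interior points. Both determine $P_n$ uniquely from the subcarrier statistics, which reproduces the closed forms in Table~\ref{table1}. I expect the main obstacle to be the non-interior case and the full MF objective without the approximation $\mathbb E[|r_0|^2]\approx(\sum_n P_n)^2$. For the exact expression, I would check that it is still a symmetric function of $\{P_n\}_{n\in\mathcal I}$ with convex dependence on those entries: it is a ratio of a fixed numerator to a convex denominator, so the numerator term $\frac1M\sum_n P_n^2(\mu_{4,n}-1)$ must be handled. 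If that breaks convexity, I would rely on the simplified objective that the paper explicitly adopts.
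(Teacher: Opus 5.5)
Your proposal is correct and follows essentially the same route as the paper: with the subset sum $\sum_{n\in\mathcal I}P_n$ held fixed, convexity of $x^2$ (for MF, under the adopted approximation $\mathbb{E}[|r_0|^2]\approx(\sum_n P_n)^2$, which makes the numerator constant) and of $1/x$ (for RF) forces equal powers on $\mathcal I$, which is exactly your Jensen/averaging step. Your extra observation that the BER-induced bounds $P_n\ge P_n^{\min}$ must be common on $\mathcal I$ (or inactive there, in which case a small step toward the average rather than full averaging keeps feasibility) covers a point the paper's proof skips, and it matters in frequency-selective channels.
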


\begin{proof}
	Let $P_{\mathcal{I}} = \sum_{n\in \mathcal{I}} P_n$.  
	For MF, the problem can be equivalently transformed into minimizing $\sum_{n\in \mathcal{I}} P_n^2$ s.t. $\sum_{n\in \mathcal{I}} P_n=P_{\mathcal{I}}$.
	Since $x^2$ is convex,
	\begin{equation} \vspace{0mm}
		\sum_{n\in \mathcal{I}} P_n^2 \ \ge\ |\mathcal{I}|\Big(\tfrac{P_{\mathcal{I}}}{|\mathcal{I}|}\Big)^2=\tfrac{P_{\mathcal{I}}^2}{|\mathcal{I}|},
		\vspace{0mm} \end{equation}
	with equality if and only if $P_n=P_{\mathcal{I}}/|\mathcal{I}|$ for all $n\in \mathcal{I}$.
	For RF, the denominator depends on $\sum_{n\in \mathcal{I}} 1/P_n$.
	Similarly, we have,
	$
	\sum_{n\in \mathcal{I}} \tfrac{1}{P_n} \ \ge\ |\mathcal{I}|\Big(\tfrac{P_{\mathcal{I}}}{|\mathcal{I}|}\Big)^{-1}=\tfrac{|\mathcal{I}|^2}{P_{\mathcal{I}}},
	$
	with equality if and only if $P_n=P_{\mathcal{I}}/|\mathcal{I}|$ for all $n\in \mathcal{I}$.
	In both cases the SINR numerator is unchanged, hence equal power on $\mathcal{I}$ is optimal.
\end{proof}

\subsubsection{Matched Filtering}

Then, the optimal power allocation problem can be re-written as
\begin{equation} \vspace{0mm}
	\begin{aligned}
		\min_{\{P_n \geq 0\}} \quad 
		\sum_{n=0}^{N-1} b_n P_n^2, \quad \text{s.t.} \quad \sum_{n=0}^{N-1} P_n = N P_\text{ave}.
	\end{aligned}
\vspace{0mm} \end{equation}
where $ b_n = \frac{1}{M} \left(\mu_{4,n} - 1 \right) + \frac{N}{N-1}$. The Lagrangian is
\begin{equation} \vspace{0mm}
	L = \sum_{n=0}^{N-1} b_n P_n^2 - \lambda \left(  \sum_{n=0}^{N-1} P_n - N P_\text{ave} \right).
\vspace{0mm} \end{equation}
Taking the derivative with respect to $P_n$ and setting it to zero yields
$
2 b_n P_n - \lambda = 0$, $P_n^* = \frac{\lambda}{2 b_n}
$.
Applying the power constraint,
$
\sum_{n=0}^{N-1} \frac{\lambda}{2 b_n} = N P_\text{ave}$, 
$\lambda = \frac{2 N P_\text{ave}}{\sum_{n=0}^{N-1} b_n^{-1}}
$.
Therefore, the optimal allocation is
$
P_i^* = \frac{N P_\text{ave}}{b_i \sum_{n=0}^{N-1} b_n^{-1}}
$.
Subcarriers with higher fourth moments, i.e., larger kurtosis, receive less power.

\subsubsection{Reciprocal Filtering}

For the reciprocal filter, the optimal power allocation is the solution to
\begin{equation} \vspace{0mm}
	\begin{aligned}
		\max_{\{P_n \geq 0\}} \quad 
		\frac{ \left( \sum_{n=0}^{N-1} P_n \right)^2 }
		{ \sum_{n=0}^{N-1} \nu_{-2,n}/P_n } \quad
		\text{s.t.} \quad &\sum_{n=0}^{N-1} P_n = N P_\text{ave}.
	\end{aligned}
\vspace{0mm} \end{equation}

Using the Cauchy-Schwarz inequality, the minimum of $ \sum_{n=0}^{N-1} \nu_{-2,i}/P_n $ under a sum constraint is achieved when
\begin{equation} \vspace{0mm}
	P_i^* = \frac{N P_\text{ave} \sqrt{\nu_{-2,i}}}{\sum_{n=0}^{N-1} \sqrt{\nu_{-2,n}}}.
\vspace{0mm} \end{equation}
Therefore, subcarriers with larger $\nu_{-2,i}$, i.e., more susceptible to noise amplification in RF, receive more power to compensate.

This section  implies a power allocation strategy for maximizing sensing performance. As summarized in Table \ref{table1}, the MF solution penalizes subcarriers with high kurtosis, whereas the RF solution allocates more power to subcarriers with a larger inverse second moment. These results yield closed-form guidelines for optimal power allocation in ISAC systems with arbitrary modulation constellations.

\vspace{0mm}
\subsection{Joint Constellation Selection and Power Allocation for Sensing Improvement under Communication Guarantees}

In this section, we analyze the joint constellation and power allocation under flat fading and frequency-selective channels. While the former admits a simple solution with at most three active constellations, the latter typically demands more diverse modulation across subcarriers.

\subsubsection{Flat Fading Channel}
We consider a scenario with a sufficiently large number of subcarriers, where the allocation of constellations for each subcarrier can be approximated by a continuous distribution over a set of candidate constellations. Let $\eta_j \in [0, 1]$ denote the proportion of subcarriers assigned to the $j$th candidate constellation. In the following, we rigorously prove that under identical channel gains (i.e., in a flat fading channel), the optimal solution requires at most three nonzero values of $\eta_j$. The flat fading assumption is valid for narrowband systems operating in environments with small delay spread.

By Proposition \ref{thm:eqpower_subset_opt}, all subcarriers using the same constellation are assigned equal power in the optimal solution. Then, when $N \to \infty$, it can be readily proved that the problem of (P1) can be equivalently transformed into
\begin{subequations}\label{eq:constraints}
	\begin{align}
		&\max_{{\boldsymbol{\eta}}, \tilde{\bf{P}}} \quad  \Gamma_s, \\
		\text{s.t.} \quad 
		&\sum_{j=1}^J \eta_j = 1, \quad \eta_j \geq 0,  \forall j, \label{eq:allocation-sum} \\
		& \sum_{j=1}^J \eta_j P_j = {P}_{\text{ave}}, \quad P_j\!\ge P_{j,\min}, \forall j, \label{eq:power-equality} \\ 
		& \sum_{j=1}^J \eta_j R_j \geq R_{\min}, \label{eq:rateconst1}
	\end{align}
\end{subequations}
where ${\boldsymbol{\eta}} = [\eta_1, \cdots, \eta_J]$, $\tilde {\bm{P}} = [P_1, \cdots, P_J]$, $\Gamma_s = \frac{\sigma_{\alpha_q}^2 M N^2}{ {\sigma_z^2}{} N\sum_{j=1}^{J} \eta_j \frac{\nu_{-2,j}}{P_j} }$ for RF, and $\Gamma_s = \frac{
	\sigma_{\alpha_q}^2 N^2  P_{\text{ave}}^2
}{
	{\sum_{i \neq q} \sigma_{\alpha_i}^2} \left[
	\frac{N}{M}{\sum_{j=1}^{J} \eta_j P_j^2 (\mu_{4,j}-1)}
	+ {\frac{N^2\sum_{j=1}^{J} \eta_j P_j^2 - N^2 P_{\text{ave}}^2}{N-1}
		\vphantom{\sum_{n=0}^{N-1} P_n^2}
	}
	\right]
	+ \frac{\sigma_z^2}{M} N P_{\text{ave}}
}$ for MF. Also, $R_j = \log_2|\mathcal{C}_j|$.

To efficiently solve problem (\ref{eq:constraints}), we define the average power share of constellation $j$ as
$
\theta_j \triangleq \eta_j P_j .
$
Then $P_j=\theta_j/\eta_j$ for any $\eta_j>0$, and if $\eta_j=0$ we set $\theta_j=0$
(the choice of $P_j\!\ge0$ is immaterial). With this substitution the constraints
of \eqref{eq:constraints} become
\begin{equation} \vspace{0mm}
	\begin{aligned}
		&\sum_{j=1}^J \eta_j = 1,\; \eta_j\ge0,\qquad
		\sum_{j=1}^J \theta_j =  P_{\text{ave}},\; \theta_j\ge0,\\
		&\sum_{j=1}^J \eta_j R_j \ge R_{\min},\qquad
		\eta_j P_{j,\min} \le \theta_j \le \eta_j P_{\max}, \forall j .
	\end{aligned}
	\label{eq:constr-eta-y}
\vspace{0mm} \end{equation}
Here $P_{\max}>0$ is a sufficiently large constant,
whose sole purpose is to enforce $\eta_j=0 \Rightarrow \theta_j=0$.

For MF cases, noting
$\sum_j \eta_j P_j^2=\sum_j \frac{\theta_j^2}{\eta_j}$ and
$N^2\sum_j \eta_j P_j^2-N^2P_{\text{ave}}^2
= N^2\sum_j \frac{\theta_j^2}{\eta_j}-N^2P_{\text{ave}}^2$,
the MF SINR denominator splits into a constant
plus a variable part. Dropping constants (which do not affect the maximizer),
maximizing the MF $\Gamma_s$ is equivalent to
\begin{equation} \vspace{0mm}
	\min_{\boldsymbol\eta,\bm{\theta}}\;
	\sum_{j=1}^J c_j\,\frac{\theta_j^2}{\eta_j}
	\quad\text{s.t.}\quad \eqref{eq:constr-eta-y},
	\label{eq:P1-MF-eta-y}
\vspace{0mm} \end{equation}
where
$
	c_j = \Big(\sum_{i\neq q}  \sigma_{\alpha_i}^2  \Big)\frac{N}{M}\big(\mu_{4,j}-1\big)
	+ \Big(\sum_{i\neq q}   \sigma_{\alpha_i}^2 \Big)\frac{N^2}{N-1} >0 .
$

For RF cases, since $\sum_j \eta_j \frac{\nu_{-2,j}}{P_j}=\sum_j \nu_{-2,j}\frac{\eta_j^2}{\theta_j}$,
maximizing the RF $\Gamma_s$ is equivalent to the convex program
\begin{equation} \vspace{0mm}
	\min_{\boldsymbol\eta,\bm{\theta}}\;
	 \sum_{j=1}^J \nu_{-2,j}\,\frac{\eta_j^2}{\theta_j}
	\quad\text{s.t.}\quad \eqref{eq:constr-eta-y}.
	\label{eq:P1-RF-eta-y}
\vspace{0mm} \end{equation}

Problems \eqref{eq:P1-MF-eta-y} and \eqref{eq:P1-RF-eta-y} are
\emph{equivalent} to \eqref{eq:constraints} under continuous power
(with BER thresholds handled as $\theta_j\!\ge\!\eta_j P_{j,\min}$).
Both objective functions are convex, as each term $\eta_j^2/\theta_j$ or $\theta_j^2/\eta_j$ is
quadratic-over-linear and the constraints are linear.

{{To further reduce the computational complexity, the following structural results are developed.}}
\begin{theorem}\label{TwoconstellationIsEnough}
	In (\ref{eq:constraints}), there exists an optimal solution $\boldsymbol{\eta}^*$ with at most three active constellation, i.e.,
	\begin{equation} \vspace{0mm}
		\left|\{j: \eta_j^* > 0\}\right| \leq 3.
		\vspace{0mm} \end{equation}
\end{theorem}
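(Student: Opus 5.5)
Our approach is a Carathéodory-type reduction. We fix an optimal solution of \eqref{eq:constraints} and freeze the per-constellation powers $P_j^*$. We then show that the weights $\boldsymbol{\eta}$ can be re-chosen on a vertex of a low-dimensional polytope without decreasing $\Gamma_s$. The key observation is that, once the powers $P_j$ are fixed, every quantity entering the problem is linear in $\boldsymbol{\eta}$. Concretely, let $(\boldsymbol{\eta}^*,\tilde{\mathbf P}^*)$ be optimal; existence follows from compactness after bounding $P_j\le P_{\max}$ as in \eqref{eq:constr-eta-y}. Let $\mathcal{S}=\{j:\eta_j^*>0\}$ and keep $P_j=P_j^*$ for $j\in\mathcal{S}$. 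The feasibility constraints then read
\begin{equation*}
\sum_{j\in\mathcal S}\eta_j=1,\qquad \sum_{j\in\mathcal S}\eta_jP_j^*=P_{\text{ave}},\qquad \sum_{j\in\mathcal S}\eta_jR_j\ge R_{\min},\qquad \eta_j\ge 0 .
\end{equation*}
The lower bounds $P_j\ge P_{j,\min}$ remain satisfied automatically.

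The next step is to check that the objective is a monotone function of a single further linear functional of $\boldsymbol{\eta}$. For RF, $\Gamma_s$ is decreasing in $L_{\mathrm{RF}}(\boldsymbol\eta)=\sum_j\eta_j\nu_{-2,j}/P_j^*$. For MF, the numerator $\sigma_{\alpha_q}^2N^2P_{\text{ave}}^2$ is constant thanks to the power equality. The denominator is then an increasing affine function of $L_{\mathrm{MF}}(\boldsymbol\eta)=\sum_j\eta_j P_j^{*2}\big[\tfrac{N}{M}(\mu_{4,j}-1)+\tfrac{N^2}{N-1}\big]$. We therefore consider the linear program
\begin{equation*}
\min_{\boldsymbol\eta\ge 0}\; L(\boldsymbol\eta)\quad\text{s.t.}\quad \sum_j\eta_j=1,\ \sum_j\eta_jP_j^*=P_{\text{ave}},\ \sum_j\eta_jR_j\ge R_{\min}.
\end{equation*}
This LP is feasible, since $\boldsymbol\eta^*$ is feasible, and bounded, so it attains its minimum at a vertex (basic feasible solution). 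Its constraint system has two equalities and one inequality. After adding a slack variable for the rate constraint, there are three equality rows, so any basic feasible solution has at most three nonzero entries among the $\eta_j$ and the slack, and hence at most three nonzero $\eta_j$. Let $\hat{\boldsymbol\eta}$ be such a vertex. It satisfies $L(\hat{\boldsymbol\eta})\le L(\boldsymbol\eta^*)$, so $\Gamma_s(\hat{\boldsymbol\eta},\tilde{\mathbf P}^*)\ge\Gamma_s(\boldsymbol\eta^*,\tilde{\mathbf P}^*)$. It is therefore also optimal and has support size at most $3$. Constellations outside $\mathcal S$ stay inactive, which is consistent with the convention $\theta_j=0$.

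The main obstacle is the bookkeeping at this step, and in particular making sure the MF objective really is monotone in a linear function of $\boldsymbol\eta$ alone once the powers are frozen. The term $-N^2P_{\text{ave}}^2$ and the numerator both depend on $\boldsymbol\eta$ only through $\sum_j\eta_jP_j$, and that sum is pinned to $P_{\text{ave}}$ by the constraint. A second point to handle is attainment of the optimum. If the supremum over $\tilde{\mathbf P}$ is not attained, we would instead apply the same vertex argument to any near-optimal solution, or argue existence directly from the convex reformulations \eqref{eq:P1-MF-eta-y}--\eqref{eq:P1-RF-eta-y} over the compact set \eqref{eq:constr-eta-y}. Finally, we note that if the rate constraint is slack at the chosen vertex, the slack variable itself is basic, and the support shrinks to at most two constellations.
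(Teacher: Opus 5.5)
Your proposal is correct and follows essentially the same route as the paper's proof in Appendix D: freeze the optimal powers so that $\Gamma_s$ becomes monotone in an affine functional of $\boldsymbol\eta$, add a slack to the rate constraint to obtain a standard-form LP with three equality rows, and invoke an optimal basic feasible solution with at most three nonzero entries (at most two constellations when the slack is basic). Two small points go slightly beyond the paper, which runs the LP over all $J$ constellations with $P_j^\star$ (immaterial for inactive $j$) and simply assumes an optimum exists. Restricting the LP to the support $\mathcal S$ of $\boldsymbol\eta^*$ avoids any question about feasibility of unused powers, and your remark on attainment addresses the existence assumption.
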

\begin{proof}
	Please refer to Appendix D.
\end{proof}
\begin{remark}
	The above result shows that the Pareto-optimal operating point can almost always be achieved by mixing at most three constellations. {{Intuitively, since the solution is a convex combination of constellation types, Carathéodory's theorem \cite{boyd2004convex} implies any optimum uses at most three extreme points, and if the rate constraint is inactive, at most two.}} 
	This has significant engineering value: Instead of an exhaustive search over the entire $J$-dimensional simplex of constellation mixtures, it suffices to consider all $\binom{J}{3}$ possible triples and optimize their respective weights.  The joint sensing-communication trade-off can therefore be efficiently implemented by continuously interpolating among three selected modulation formats. 
\end{remark}

{{By Theorem \ref{TwoconstellationIsEnough}, for each fixed triplet, the remaining optimization can be further reduced to a one-dimensional exact subproblem. For a fixed active triplet $\mathcal S=\{1,2,3\}$ with $R_1\le R_2\le R_3$, and in the generic Pareto-optimal case where the rate constraint is active, the mixture weights can be reduced to one degree of freedom by
		$
		\eta_1+\eta_2+\eta_3=1,$ and $ 
		\eta_1R_1+\eta_2R_2+\eta_3R_3=R_{\min}.
		$
		Specifically, letting $\eta_2=x$ gives
		$
		\eta_1(x)=\frac{R_3-R_{\min}-x(R_3-R_2)}{R_3-R_1},
		\eta_3(x)=\frac{R_{\min}-R_1-x(R_2-R_1)}{R_3-R_1},
		$
		so the reduced problem depends only on the scalar $x$.
		For MF, substituting $\eta_j(x)$ into the equivalent flat-fading formulation yields the inner power-allocation subproblem
		\begin{equation}
			\min_{P_j\ge P_{j,\min}} \;\sum_{j=1}^{3}\eta_j(x)c_j P_j^2
			\quad
			\text{s.t.}\quad
			\sum_{j=1}^{3}\eta_j(x)P_j=P_{\text{ave}},
		\end{equation}
		where
		$
		c_j \triangleq \Big(\sum_{i\neq q}\sigma_{\alpha_i}^2\Big)\frac{N}{M}(\mu_{4,j}-1)
		+\Big(\sum_{i\neq q}\sigma_{\alpha_i}^2\Big)\frac{N^2}{N-1}>0.
		$
		For each feasible BER-active pattern, define
		$
		\mathcal A \triangleq \{\,j\in\mathcal S: P_j^\star=P_{j,\min}\,\},
		\mathcal U \triangleq \mathcal S\setminus\mathcal A.
		$ By the KKT conditions, the optimal powers admit the closed form
		\[
		P_j^\star(x)=
		\begin{cases}
			P_{j,\min}, & j\in\mathcal A,\\[1mm]
			\dfrac{t_{\mathcal A}(x)}{c_j}, & j\in\mathcal U,
		\end{cases}
		\]
		where $t_{\mathcal A}(x)=
		\frac{P_{\text{ave}}-\sum_{j\in\mathcal A}\eta_j(x)P_{j,\min}}
		{\sum_{j\in\mathcal U}\eta_j(x)/c_j}.$
		Substituting $P_j^\star(x)$ back gives the one-dimensional piecewise objective
		\begin{equation}
			f_{\mathcal A}^{\mathrm{MF}}(x)
			=
			\sum_{j\in\mathcal A}\eta_j(x)c_jP_{j,\min}^2
			+
			\frac{\Big(P_{\text{ave}}-\sum_{j\in\mathcal A}\eta_j(x)P_{j,\min}\Big)^2}
			{\sum_{j\in\mathcal U}\eta_j(x)/c_j},
		\end{equation}
		which is of the form
		$
		f_{\mathcal A}^{\mathrm{MF}}(x)=a_0+a_1x+\frac{(b_0+b_1x)^2}{d_0+d_1x}.
		$
		Hence, in each piece, the stationary points are obtained from a quadratic equation, and the exact optimum on the fixed triplet support is found by checking finitely many closed-form candidates together with the interval endpoints. The RF case can be handled in the same manner as above, and the details are omitted for brevity. Therefore, the flat-fading problem can be solved exactly without a generic convex solver by enumerating all feasible triplets.}}\footnote{{The interval endpoints also cover degenerate cases where one mixture weight becomes zero, and therefore include the optimal two-constellation and single-constellation solutions.}}

{{
			For the flat-fading problem, a candidate constellation $j$ can be safely discarded in a preprocessing step if there exists another constellation $i$ such that
			\begin{equation}
				R_i \ge R_j,\qquad
				P_{i,\min}\le P_{j,\min},\qquad
				w_i\le w_j,
			\end{equation}
			where $w_j = \mu_{4,j}$ for MF and $w_j = \nu_{-2,j}$ for RF. 
			Indeed, under these conditions, constellation $i$ is no worse than constellation $j$ in terms of communication rate, BER feasibility, and sensing cost. Therefore, any allocation mass assigned to constellation $j$ can be shifted to constellation $i$ without degrading the objective value, and constellation $j$ need not be considered in the flat-fading optimization. This preprocessing rule helps reduce both the triplet search space and the size of the equivalent convex program.
}}

\subsubsection{Frequency-Selective Channel}

When subcarrier gains $H_n$ vary across $n\in\{0,\dots,N-1\}$, the three-constellation principle in Theorem \ref{TwoconstellationIsEnough} no longer holds, {{so restricting the design to at most three constellations may incur performance loss.}}  Each subcarrier should choose its constellation according to its local channel quality. Define the power relevant variable
\begin{equation} \vspace{0mm}
	\kappa_{n,j} \triangleq \chi_{n,j}\,P_{n,j},
\vspace{0mm} \end{equation}
so that $P_{n,j}$ is the transmit power when $\chi_{n,j}=1$ (otherwise $P_{n,j}$ is immaterial and $\kappa_{n,j}=0$). For a target BER, denote the per-subcarrier minimum power by
$
P_{n,j}^{\min} \triangleq P_j^{\min}(H_n),
$
and let $P_{\max}$ be a loose upper bound.  For MF-based sensing, the coefficient
$
a_j =\Big(\sum_{i\neq q}\! \sigma_{\alpha_i}^2 \Big)\frac{N}{M}\,(\mu_{4,j}-1)
+\Big(\sum_{i\neq q}\! \sigma_{\alpha_i}^2 \Big)\frac{N^2}{N-1}>0
$
collects the modulation-dependent moments.

Then, maximizing MF-SINR is equivalent to the following mixed-integer quadratic-over-linear (QOL) program:
\begin{subequations}\label{eq:FS-MF-problem-clean}
	\begin{align}
		\min_{\chi,\kappa}\quad&
		\sum_{n=0}^{N-1}\sum_{j=1}^{J} a_j\,\frac{\kappa_{n,j}^2}{\chi_{n,j}} ,
		\label{eq:MF-obj}
		\\
		\text{s.t.}\quad&
		\sum_{j=1}^{J} \chi_{n,j}=1,\qquad \chi_{n,j}\in\{0,1\},\ \forall n,
		\label{eq:assign}
		\\
		&
	\frac{1}{N} \sum_{n=0}^{N-1}\sum_{j=1}^{J} \kappa_{n,j}=P_{\text{ave}},\qquad \kappa_{n,j}\ge 0,
		\label{eq:power-sum}
		\\
		&
		\frac{1}{N}\sum_{n=0}^{N-1}\sum_{j=1}^{J} \chi_{n,j} R_j \;\ge\; R_{\min},
		\label{eq:rate-avg}
		\\
		&
		\chi_{n,j} P_{n,j}^{\min}\le \kappa_{n,j}\le \chi_{n,j} P_{\max},\ \forall n,j .
		\label{eq:ber-lb-cap}
	\end{align}
\end{subequations}
The perspective $\kappa_{n,j}^2/\chi_{n,j}$ is jointly convex on $\{\chi_{n,j}>0,\ \kappa_{n,j}\ge0\}$ and enforces $\kappa_{n,j}=0$ if $\chi_{n,j}=0$ via \eqref{eq:ber-lb-cap}. The problem admits an exact reformulation as a mixed-integer second-order cone program (MISOCP). 

We introduce the dual variables $\psi_{\mathcal{L}}\in\mathbb{R}$ for \eqref{eq:power-sum} and $\lambda_{\mathcal{L}}\ge 0$ for \eqref{eq:rate-avg}. Dropping constants, the MF Lagrangian reads
\begin{equation} \vspace{0mm}\label{eq:L-MF}
	\begin{aligned}
		\mathcal{L}_{\mathrm{MF}}(\chi,\kappa;\psi_{\mathcal{L}},\lambda_{\mathcal{L}})
		=&\sum_{n,j} a_j\,\frac{\kappa_{n,j}^2}{\chi_{n,j}}
		-\psi_{\mathcal{L}}\Big(\sum_{n,j} \kappa_{n,j}-N P_{\text{ave}}\Big) \\
		&+\lambda_{\mathcal{L}}\Big(R_{\min}-\tfrac{1}{N}\sum_{n,j} \chi_{n,j} R_j\Big).
	\end{aligned}
\vspace{0mm} \end{equation}
We now design a low-complexity scheme by moving the coupling constraints into the objective via $(\psi_{\mathcal{L}},\lambda_{\mathcal{L}})$ so that, for fixed prices, the $N$ subcarriers decouple.

\paragraph{Inner layer update.}
For a fixed $(\psi_{\mathcal{L}},\lambda_{\mathcal{L}})$ and an active pair $(n,j)$ (i.e., $\chi_{n,j}=1$), minimizing w.r.t. $\kappa_{n,j}$ over the interval $[P_{n,j}^{\min},\,P_{\max}]$ yields the one-dimensional convex quadratic
\begin{equation} \vspace{0mm}
\min_{\,P_{n,j}^{\min}\ \le\ \kappa\ \le\ P_{\max}}\ a_j \kappa^2 - \psi_{\mathcal{L}} \kappa,
\vspace{0mm} \end{equation}
whose optimal value is the reduced cost
\begin{equation} \vspace{0mm}\label{eq:phi}
	\phi_{n,j}(\psi_{\mathcal{L}})
	=
	\begin{cases}
		a_j \big(P_{n,j}^{\min}\big)^2 - \psi_{\mathcal{L}} P_{n,j}^{\min}, 
		& \displaystyle \frac{\psi_{\mathcal{L}}}{2a_j} \le P_{n,j}^{\min},\\[8pt]
		-\dfrac{\psi_{\mathcal{L}}^{2}}{4\,a_j}, 
		& \displaystyle P_{n,j}^{\min} \le \frac{\psi_{\mathcal{L}}}{2a_j} \le P_{\max},\\[10pt]
		a_j P_{\max}^{2} - \psi_{\mathcal{L}} P_{\max}, 
		& \displaystyle \frac{\psi_{\mathcal{L}}}{2a_j} \ge P_{\max}.
	\end{cases}
	\vspace{0mm} \end{equation}
Substituting $\phi_{n,j}(\psi_{\mathcal{L}})$ back into \eqref{eq:L-MF} eliminates $\kappa$ and leaves a linear form in $\chi$:
$
\sum_{n,j} \chi_{n,j}\!\left(\phi_{n,j}(\psi_{\mathcal{L}})-\frac{\lambda_{\mathcal{L}}}{N}R_j\right)
+\psi_{\mathcal{L}} N P_{\text{ave}}+\lambda_{\mathcal{L}} R_{\min}.
$
Due to constraints in \eqref{eq:assign}, the minimum over $\chi$ is attained at a simplex vertex, i.e.,
\begin{equation} \vspace{0mm}\label{eq:select}
	j_n \in \arg\min_{j}\Big\{\,\phi_{n,j}(\psi_{\mathcal{L}})-\tfrac{\lambda_{\mathcal{L}}}{N}R_j\,\Big\},
	\chi_{n,j_n}=1,\ \ \chi_{n,j\neq j_n}=0 .
\vspace{0mm} \end{equation}
Given $\{j_n\}$, under the Lagrangian relaxation the per-subcarrier power update is separable and admits a clamped closed form:
\begin{equation} \vspace{0mm}\label{eq:alloc-closed}
	\kappa_n^\star(\psi_{\mathcal{L}})
	= \min\!\Big\{\,P_{\max},\ \max\!\big\{\,P_{n,j_n}^{\min},\ \tfrac{\psi_{\mathcal{L}}}{2\,a_{j_n}}\,\big\}\Big\}.
	\vspace{0mm} \end{equation}

\paragraph{Outer layer update.}
Let $\overline{R}=\tfrac{1}{N}\sum_{n} R_{j_n}$ be the achieved average rate and $S_y=\sum_n \kappa_n^\star$ the total power returned by \eqref{eq:alloc-closed}. $\psi_{\mathcal{L}}$ and $\lambda_{\mathcal{L}}$ are updated by a projected subgradient, 
\begin{equation} \vspace{0mm}
\psi_{\mathcal{L}}^{(t+1)}=\psi_{\mathcal{L}}^{(t)}+\psi_t\!\left(N P_{\text{ave}}-\sum_{n} \kappa_n^\star\!\big(\psi_{\mathcal{L}}^{(t)}\big)\right)
\vspace{0mm} \end{equation}
and
\begin{equation} \vspace{0mm}\label{eq:lambda-update}
	\lambda_{\mathcal{L}}^{(t+1)}
	= \Big[\, \lambda_{\mathcal{L}}^{(t)} + \psi_t\big(R_{\min}-\overline{R}^{(t)}\big)\,\Big]_+,
\vspace{0mm} \end{equation}
where $\psi_t>0$. 
Combining \eqref{eq:select}-\eqref{eq:lambda-update} yields a scalable bilevel scheme that preserves the binary selection $\chi_{n,j}\in\{0,1\}$, satisfies the BER lower bounds, and computes $\kappa$ in closed form at every iteration.

The RF case uses the same constraint set \eqref{eq:assign}-\eqref{eq:ber-lb-cap} and replaces the objective by
$
\frac{\sigma_z^2}{MN}\sum_{n,j} \nu_{-2,j}\,\frac{\chi_{n,j}^2}{\kappa_{n,j}}.
$
Proceeding identically, first eliminating $\kappa$ to obtain a per-pair reduced cost and then applying the one-hot rule \eqref{eq:select}, leads to an analogous closed-form power update. To save space, we omit the RF algebra; it is entirely similar.

\section{Simulations and Experiments}\label{Experimental}

Using numerical simulation and experimental results, we have studied the fundamental characteristics of constellation combination of ISAC systems. The system parameters are given as follows: the frequency $f^c = 2.45$ GHz, the bandwidth $B = 20$ MHz, subcarrier number $ N = 64$, number of coherent processing symbols $M = 16$. {{The communication reliability constraint is imposed through the modulation-dependent effective-SNR threshold $\gamma_j^{\min}$ required to achieve ${\rm BER}_{\rm th}=10^{-4}$.}}
{{The candidate constellation set consists of QPSK, 16QAM, 64QAM, 256QAM, 8APSK, 16APSK, and 32APSK.}}{\footnote{{Unless otherwise stated, the APSK candidates are normalized ring-APSK alphabets with ring populations/radii given by \(8\mathrm{APSK}:[3,5]/[1,\sqrt{2}]\), \(16\mathrm{APSK}:[4,12]/[1,\sqrt{2}]\), and \(32\mathrm{APSK}:[4,12,16]/[1,\sqrt{2},\sqrt{3}]\). The corresponding fourth-order moments and inverse second moments are \((\mu_4,\nu_{-2})=(1.0888,1.1172)\), \((1.0612,1.0938)\), and \((1.0859,1.1380)\), respectively.}}}

\subsection{Simulations: Trade-offs between Sensing and Communication}

\begin{figure}[t]
	\centering
	\includegraphics[width=8.4cm]{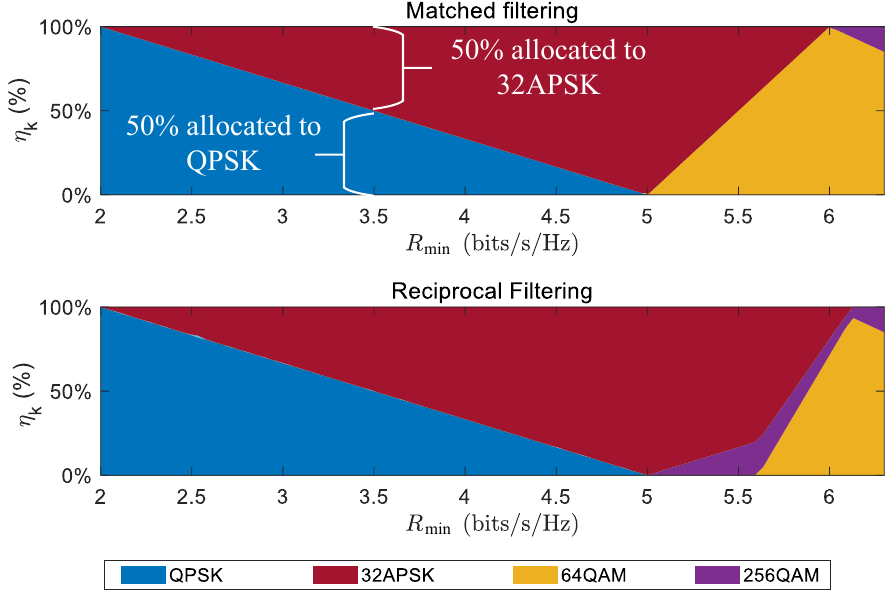}
	\vspace{0mm}
	\caption{{Optimal constellation proportions versus the minimum rate requirement $R_{\min}$ under flat fading ($P_{\text{ave}}=6$).}}
	\label{figure2}
\end{figure}
Fig. \ref{figure2} illustrates the optimal constellation mixture $\boldsymbol{\eta}$ as a function of the minimum rate guarantee $R_{\min}$ for identical subcarrier channel gains, under both MF and RF processing, where the y-axis reports the subcarrier proportion. Specifically, when the rate constraint is 3.5 bits/s/Hz, the optimal constellation mixture for both MF and RF consists of 50\% QPSK and 50\% 32APSK. Consistent with Theorem \ref{TwoconstellationIsEnough} under the flat fading channel, across almost the entire range of $R_{\min}$, the optimizer activates no more than three constellations at a time. For mild rate requirements, the solution concentrates on low-order constellations (predominantly QPSK) to suppress MF sidelobes and reduce the RF noise-variance term. As $R_{\min}$ tightens, the mixture shifts monotonically toward higher-order constellations (e.g., 32APSK/64QAM), reducing the share of low-order formats to meet the rate while maintaining sensing quality. Under frequency-flat channels, MF and RF show almost identical switching thresholds and piecewise-linear transitions, indicating essentially the same constellation mixture.

Fig. \ref{figure2b} shows the optimal joint constellation selection and power allocation over a frequency-selective channel ($N=64$, $M=16$, $R_{\min}=6$ bits/s/Hz). {{For the frequency-selective simulations, the subcarrier gains $\{H_n\}$ are generated from a quasi-static tapped-delay-line channel with exponentially decaying power-delay profile.}} The red curve gives the subcarrier gains $|H_n|^2$ sorted in descending order, while bar colors indicate the chosen constellation and bar heights the allocated power $P_n$. 
For both MF and RF (c.f. Fig. \ref{figure2b}(a) and Fig. \ref{figure2b}(b)), subcarriers with larger $|H_n|^2$ are mapped to higher-order constellations, exploiting the strong channels to meet the communication rate requirements with less power, whereas the subcarriers with weaker channel gain are assigned lower-order constellations to preserve the sensing metric, i.e., reduced MF sidelobes or smaller RF noise variance with only a modest rate penalty. Moreover, within any contiguous block employing the same constellation, the optimal per-subcarrier powers equalize at the BER threshold, in agreement with Proposition~\ref{thm:eqpower_subset_opt}. It is noted that only subcarriers near block boundaries show increased $P_n$ to reach this threshold, after which additional power yields little benefit and is therefore not allocated. The MF and RF solutions share a similar block structure and power allocation trends, differing only by small boundary offsets.

Fig. \ref{figure3} illustrates that range RMSE versus SNR for both MF and RF. In Fig. \ref{figure3}(a), with optimal power allocation and varying constellation mixtures, MF attains lower RMSE at low SNR, whereas RF becomes superior at high SNR. This crossover arises because reciprocal weighting in RF increases effective noise variance at low SNR, especially when the mixture includes constellations without a constant envelope such as 16QAM, while MF is limited at high SNR by a sidelobe floor. A higher fraction of 16QAM consistently degrades sensing for both criteria: RMSE is largest with 100\% 16QAM, intermediate with 50\% QPSK and 50\% 16QAM, and smallest with 100\% QPSK. {{Fig. \ref{figure3}(b) shows that, for a fixed 50\% QPSK and 50\% 16QAM mixture, the ``Optimal'' power allocation given in Table I consistently outperforms both random allocation and the communications water-filling baseline.}} The gain is most pronounced for MF at low SNR and remains substantial for RF at high SNR.

\begin{figure}[t]
	\centering
	\includegraphics[width=8.8cm]{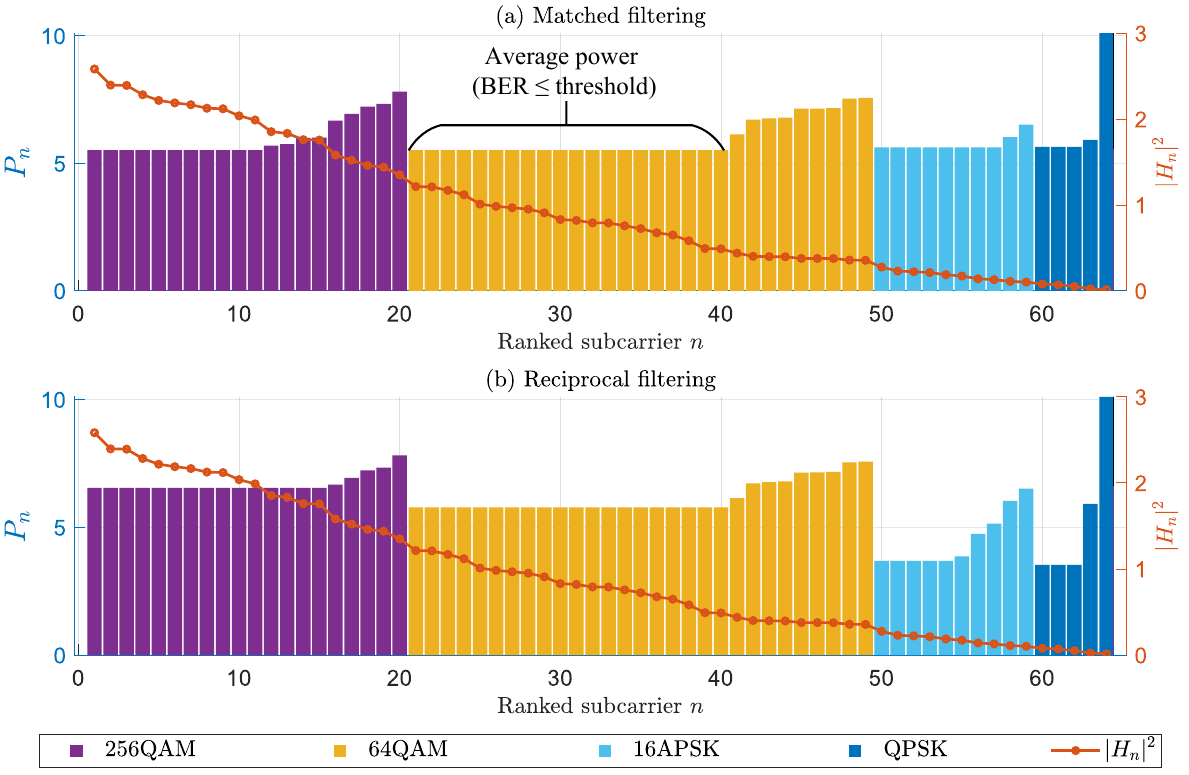}
	\vspace{0mm}
	\caption{Optimal per-subcarrier constellation selection and power allocation in a frequency-selective channel ($R_{\min}=6$ bits/s/Hz, $N=64$, $M=16$).}
	\label{figure2b}
\end{figure}

\begin{figure}[t]
	\centering
	\setlength{\abovecaptionskip}{0.cm}
	\subfigure[Different constellation mixtures with the optimal power allocation specified in Table~I.]
	{	
		\label{figure3a}
		\vspace{0mm}
		\includegraphics[width=8.4cm]{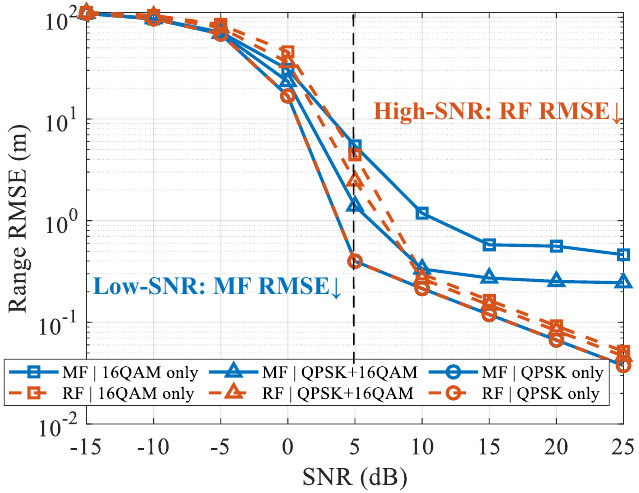}
	}
	\subfigure[Optimal, random, and water-filling power allocation for a fixed 50\% QPSK + 50\% 16QAM mixture. (Waterfilling refers to the optimal power allocation strategy in communication systems based on the water-filling algorithm.)]
	{	
		\label{figure6b}
		\vspace{0mm}
		\includegraphics[width=8.4cm]{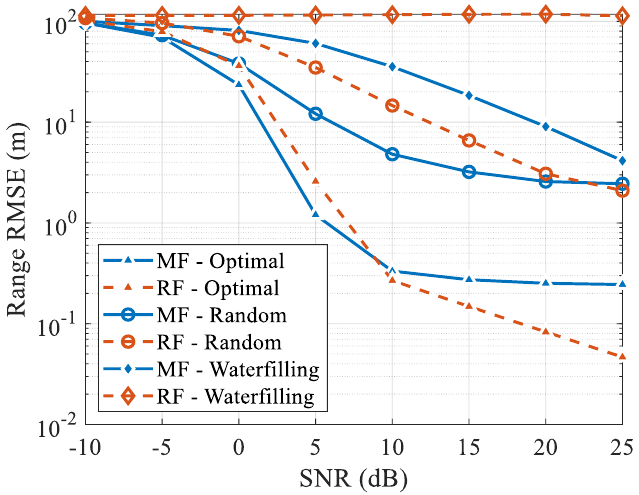}
	}
	\caption{Range RMSE versus SNR for MF and RF under different constellation mixtures and power-allocation strategies without communication constraints.}
	\label{figure3}
\end{figure}

\begin{figure}[t]
	\centering
	\includegraphics[width=8.4cm]{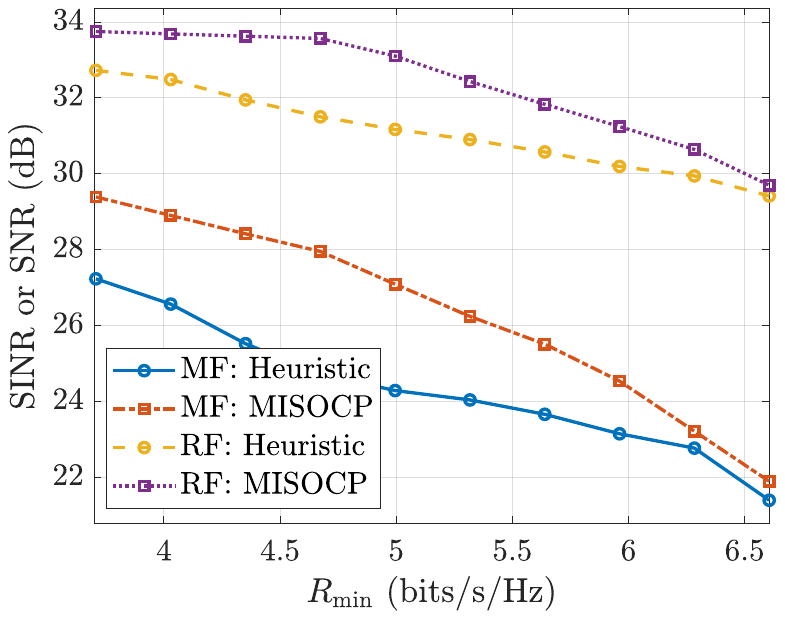}
	\vspace{0mm}
	\caption{Sensing-communication trade-off in (P1) versus the minimum rate requirement $R_{\min}$. {{``Heuristic'' refers to the proposed algorithm in (52)–(58).}}}
	\label{figure4}
\end{figure}
Fig. \ref{figure4} plots the sensing metric versus the minimum rate $R_{\min}$ at fixed transmit power under frequency-selective channel. For MF reported as SINR and RF reported as SNR, the curves decrease monotonically with $R_{\min}$, which quantifies the sensing and communication tradeoff. RF attains a higher metric than MF across the entire range. 
{{The proposed heuristic algorithm (c.f. (\ref{eq:L-MF}) - (\ref{eq:lambda-update}))}} remains close to the MISOCP benchmark with a bounded gap that narrows as $R_{\min}$ increases: about 2 to 3 dB for MF and about 1 to 2 dB for RF at small $R_{\min}$. The residual
gap is mainly due to the integer nature of the constellation
selection variables, the nonzero duality gap introduced by the
dual relaxation, and the fact that the primal solution is recovered
from dual iterates rather than obtained from a globally optimal
joint MISOCP solve.

\subsection{Over-the-air Experiment Results}

\begin{figure}[t]
	\centering
	\includegraphics[width=8.4cm]{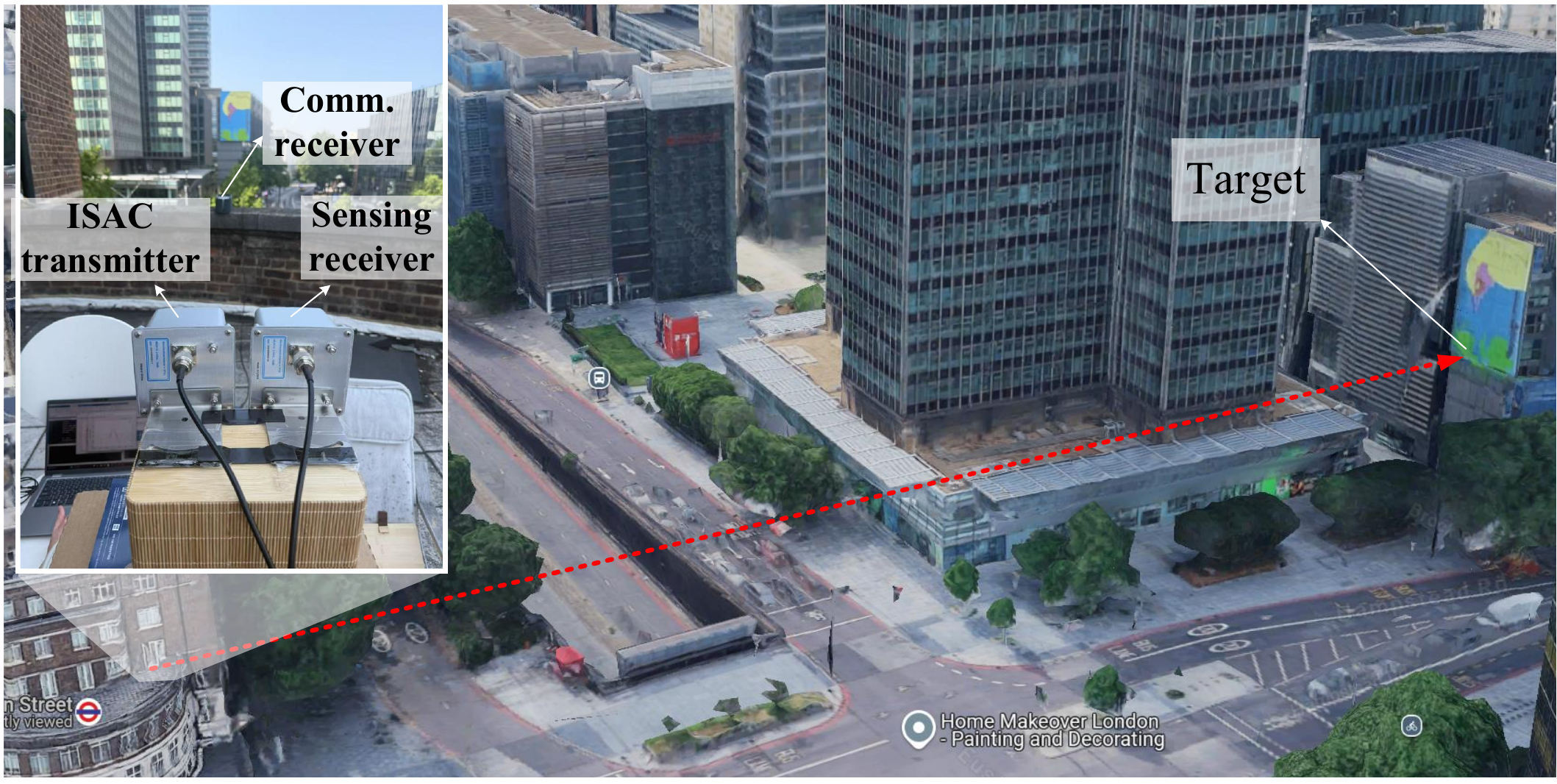}
	\vspace{0mm}
	\caption{Scenario of experiments (google maps).}
	\label{figure9}
\end{figure}
As shown in Fig. \ref{figure9}, the field trial was conducted in a street-canyon environment. An ISAC node built with ADALM-Pluto SDRs was installed on a rooftop, with a co-located receiver forming a monostatic sensing channel, while a second Pluto placed down-range acted as the communication receiver. The dashed line indicates the propagation path toward the building façade that served as the radar target (about 132.6 m).
During the experiments we transmitted OFDM frames with the proposed constellation selection strategy active at the transmitter. For each configuration we recorded sensing outputs including the range profile, the target distance estimate, and aggregate sensing metrics, together with communication outcomes including throughput and bit error rate. This controlled protocol enables a fair comparison of ranging accuracy and communication quality and isolates the effect of constellation selection.

\begin{figure}[t]
	\centering
	\includegraphics[width=8cm]{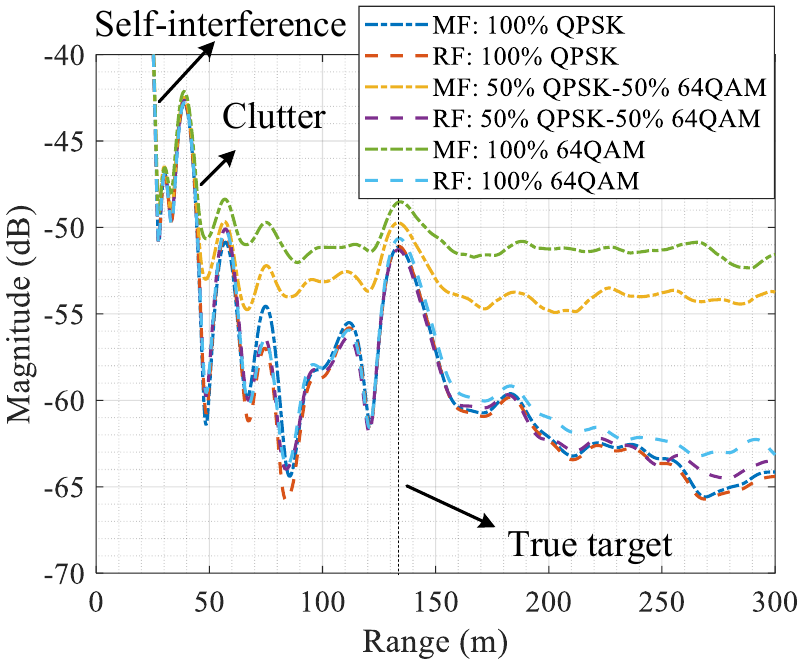}
\vspace{0mm}
	\caption{Measured range profiles for MF and RF under different constellation mixtures.}
	\label{figure10}
\end{figure}
Fig. \ref{figure10} shows range profiles (magnitude in dB) for MF and RF under three constellation mixes, including 100\% QPSK, 50\% QPSK and 50\% 16QAM, as well as 100\% 16QAM. As the share of 16QAM increases, the MF sidelobe level rises markedly across the entire delay axis, producing a pronounced clutter bump (e.g., near 50-80 m) and a monotonic degradation of the peak-to-sidelobe ratio (PSLR) around the true target. This follows from the larger amplitude variance / higher fourth-order moment of non-constant-envelope 16QAM, which injects more off-peak correlation energy that MF passes to the output. RF exhibits the same qualitative trend but with a milder increase, because its reciprocal weighting attenuates part of the off-peak energy while suffering a modest rise in effective noise-power variance. Overall, higher 16QAM fractions elevate sidelobes (MF more than RF) and reduce PSLR, while the target peak location remains unchanged.

\begin{figure}[t]
	\centering
	\includegraphics[width=8cm]{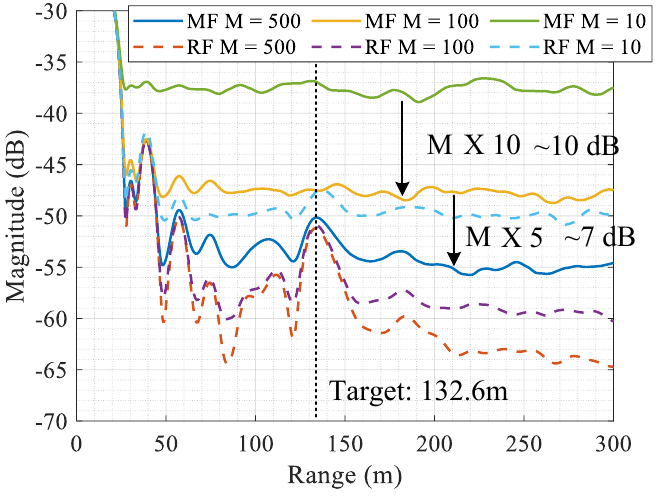}
	\vspace{0mm}
	\caption{Measured coherent-processing gain versus the number of integrated symbols $M$ for a 50\% QPSK + 50\% 16QAM waveform.}
	\label{figure14}
\end{figure}
As shown in Fig. \ref{figure14}, we compare coherent integration over $M\in\{10,100,500\}$ symbols for both the MF and the RF processor. The range peak at 132.6 m is preserved across all cases, while the sidelobe floor decreases with $M$ in accordance with the coherent-processing law $10\log_{10} M$. Specifically, increasing $M$ from 10 to 100 yields an approximate 10 dB reduction of sidelobes, and from 100 to 500 yields an additional 7 dB reduction of sidelobes. This matches the theoretical expression derived for the MF sidelobe level. The trends for RF are consistent: enlarging $M$ effectively mitigates the excess noise introduced by the random communication waveform, leading to noticeably lower sidelobes at larger $M$. {{Overall, the measurements indicate a high-SNR regime where sidelobe suppression scales predictably with $M$, corroborating the analysis. Moreover, as the transmit power or coherent integration gain increases, thermal noise becomes less dominant and the residual limitation of MF is increasingly the clutter-induced sidelobe floor, whereas RF benefits more directly from the improved effective SNR. This is why the RF advantage becomes more visible in the high-power or large-$M$ regime.}}

\begin{figure}[t]
	\centering
	\includegraphics[width=7cm]{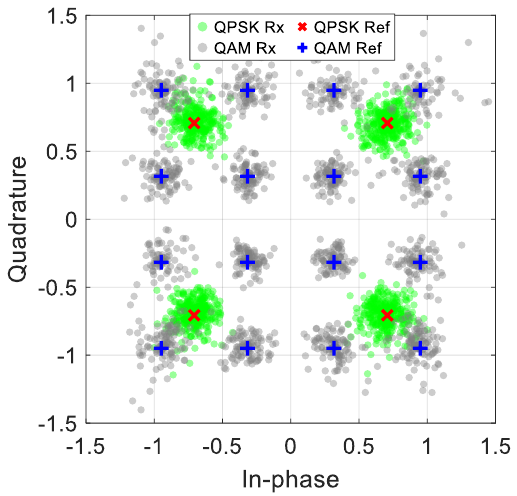}
	\vspace{0mm}
	\caption{Received I/Q scatter plot for a mixed-constellation OFDM frame at the communication receiver. }
	\label{figure13}
\end{figure}
Fig. \ref{figure13} shows the received I/Q scatter (constellation diagram) for a mixed-constellation OFDM frame in which some subcarriers use QPSK and others 16QAM. {{At the communication receiver, pilot symbols embedded in each OFDM frame are used for subcarrier-wise channel estimation. Since the CP is longer than the dominant channel delay spread in the experimental setup, standard frequency-domain one-tap equalization is applied after CP removal and FFT.}} The detected point clouds align tightly with the reference constellation markers, indicating well-separated decision regions and bounded error vector magnitude (EVM) for both formats. Since the receiver knows the per-subcarrier constellation map (e.g., via pre-designed protocol and control signaling), it applies the appropriate slicer on each tone and decodes reliably despite the heterogeneous modulation. This confirms that the proposed mixed-constellation allocation, used to benefit sensing, does not compromise communication demodulation performance.

\begin{figure}[t]
	\centering
	\includegraphics[width=7.4cm]{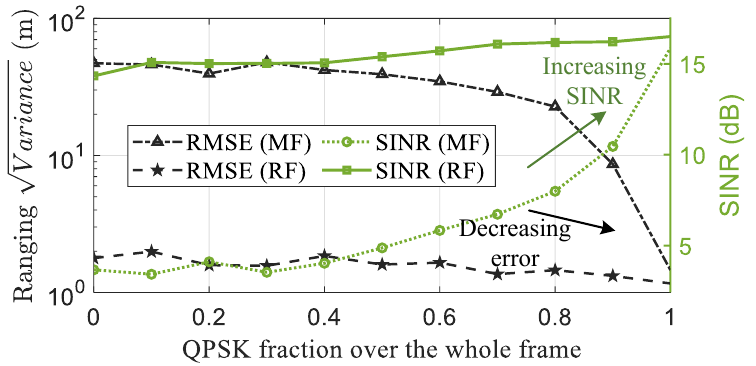}
	\vspace{0mm}
	\caption{Sensing performance versus the QPSK fraction under uniform power allocation, where the remaining subcarriers use 16QAM. }
	\label{figure11}
\end{figure}
{{Fig. \ref{figure11} plots the sensing performance versus the QPSK fraction in a mixed QPSK and 16QAM frame where the remaining symbols are 16QAM. As the QPSK fraction increases, effective SINR rises and ranging error falls for both MF and RF; the improvement is especially marked for MF because 16 QAM's amplitude dispersion elevates sidelobe/clutter power and inflates estimator variance. It can be found that replacing 16QAM with the constant-envelope QPSK suppresses these sidelobes and sharply improves MF accuracy. RF shows a much milder dependence on the mix because its reciprocal weighting already mitigates symbol-induced sidelobes, and thus its improvement with more QPSK comes mainly from a reduction in the effective noise-power term. As the frame approaches all-QPSK, both MF and RF achieve their best accuracy.}}

\begin{figure}[t]
	\centering
	\includegraphics[width=7.4cm]{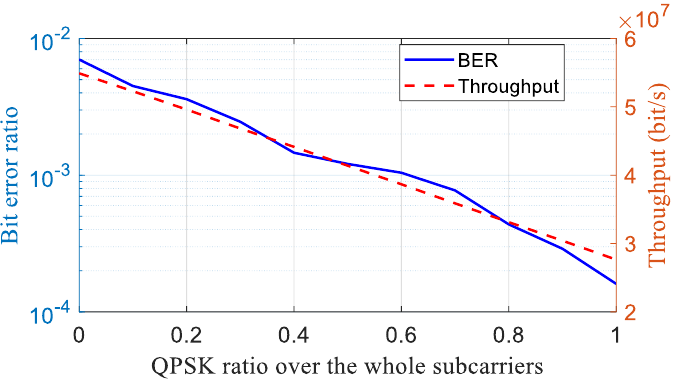}
	\vspace{0mm}
	\caption{Communication BER and throughput versus QPSK fraction (remaining symbols use 16QAM).}
	\label{figure12}
\end{figure}

Fig. \ref{figure12} plots BER (left axis, blue) and aggregate throughput (right axis, red) versus the fraction of QPSK used across all subcarriers in a frame. Throughput is reported as $R_{\mathrm{com}} (1-\mathrm{BER})$. As the QPSK fraction over the whole frame increases, the BER decreases monotonically ($\approx 10^{-2}\!\to\!10^{-4}$), reflecting the higher robustness of the lower-order constellation. However, as the concomitant loss of spectral efficiency dominates, the net throughput declines steadily because the BER reduction is insufficient to offset the rate penalty from replacing higher-order modulation with QPSK. This trend highlights the expected reliability-rate tradeoff under the tested SNR and coding settings. {{As the QPSK fraction changes, Fig. \ref{figure15} reveals a trade-off between sensing SINR output after receive filtering and achievable throughput, evidencing that constellation selection governs the sensing-communication balance.}}

\begin{figure}[t]
	\centering
	\includegraphics[width=6.4cm]{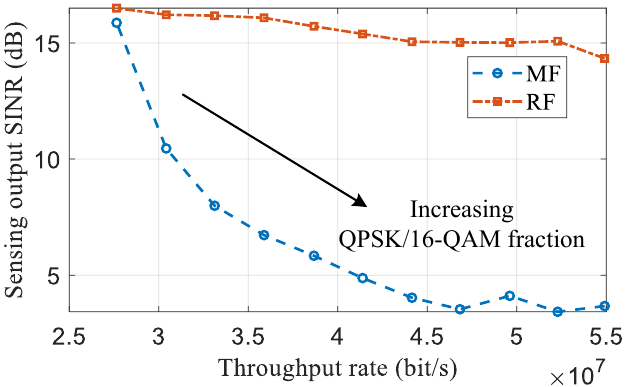}
	\vspace{0mm}
	\caption{Measured trade-off between sensing performance and communication throughput as the constellation mixture varies.}
	\label{figure15}
\end{figure}

\section{Conclusion and Future Work}
This paper developed a standards-compatible, OFDM-based ISAC framework that makes the sensing-communication tradeoff explicit through constellation combination and power shaping. Treating payload symbols as a random process, we established moment-driven sensing laws: MF sidelobes are governed by the per-subcarrier fourth moment, while RF noise amplification is governed by the inverse second moment. The analysis extends to multi-symbol coherent integration, predicts the expected processing gain, and enables super-resolution delay estimation.
Guided by these laws, we proposed a low-complexity co-design that retains a finite, off-the-shelf constellation set and shapes power subject to rate and BER constraints. We proved structural optimality in flat fading channels, any Pareto point activates at most three constellations, and provided convex-equivalent formulations with an exact MISOCP benchmark. For frequency-selective channels, a bilevel algorithm with closed-form inner updates preserves binary constellation decisions and attains near-optimal performance at a fraction of the cost.
Extensive simulations and experiments corroborate the analysis: Mixed-constellation frames demodulate reliably with appropriate subcarrier power control; increasing the constant-envelope share suppresses sidelobes; and performance crosses over from MF at low SNR to RF at high SNR. These results show that constellation-aware power shaping is a practical, standards-compliant lever for ISAC. {{Future work will consider  jointly optimizing subcarrier activation, constellation selection, and power allocation, so as to characterize the resulting trade-off among sensing performance, communication rate, and subcarrier utilization.}}

\section*{Appendix A: \textsc{Proof of Proposition \ref{EISLexpression}}}

According to the definition, $r_k$ can be further expressed as 
\begin{equation} \vspace{0mm}
	\begin{aligned}
		r_k
		&= \sum_{t=0}^{N-1}
		\left(
		\frac{1}{\sqrt{N}}\sum_{n=0}^{N-1}\sqrt{P_n}\, S[n]^{*}\, e^{-j\frac{2\pi}{N} n t}
		\right) \\
		&
		\left(
		\frac{1}{\sqrt{N}}\sum_{n'=0}^{N-1}\sqrt{P_{n'}}\, S[n']\, e^{j\frac{2\pi}{N} n' (t+k)}
		\right) \\
		&= \frac{1}{N}\sum_{n=0}^{N-1}\sum_{n'=0}^{N-1}
		\sqrt{P_n P_{n'}}\, S[n]^{*} S[n']\,
		e^{j\frac{2\pi}{N} n' k}
		\sum_{t=0}^{N-1} e^{j\frac{2\pi}{N}(n'-n)t} \\
		&  = \sum_{n=0}^{N-1} P_n |S[n]|^2 e^{j\frac{2\pi}{N}kn}.
	\end{aligned}
\vspace{0mm} \end{equation}
Then, we have
\begin{equation} \vspace{0mm}
	\label{eq:A.start}
	\begin{aligned}
		|r_k|^2
		&= \Bigl|\sum_{n=0}^{N-1} P_n |S[n]|^2 e^{j\frac{2\pi}{N}kn}\Bigr|^2 \\
		&= \sum_{n=0}^{N-1}\sum_{n'=0}^{N-1}    
				P_n P_{n'} |S[n]|^2 |S[n']|^2 e^{j\frac{2\pi}{N}k(n-n')}.
	\end{aligned}
\vspace{0mm} \end{equation}
Taking expectations and splitting the double sum into diagonal ($n=n'$) and off-diagonal ($n\neq n'$) parts yields
\begin{equation} \vspace{0mm}
	\label{eq:A.Er2}
	\mathbb{E}\!\left[|r_k|^2\right]
	= \sum_{n=0}^{N-1} \mu_{4,n} P_n^2
	\;+\; \sum_{\substack{n,n'=0\\ n\neq n'}}^{N-1}
	P_n P_{n'} e^{j\frac{2\pi}{N}k(n-n')},
\vspace{0mm} \end{equation}
where we used $\mathbb{E}[|S[n]|^4]=\mu_{4,n}$ and, by independence and unit-power normalization,
$\mathbb{E}[|S[n]|^2|S[n']|^2]=1$ for $n\neq n'$.

Summing \eqref{eq:A.Er2} over $k=1,\ldots,N-1$, the first (diagonal) term is $k$-independent and contributes
\begin{equation} \vspace{0mm}
\sum_{k=1}^{N-1}\sum_{n=0}^{N-1} \mu_{4,n} P_n^2
= (N-1)\sum_{n=0}^{N-1} \mu_{4,n} P_n^2.
\vspace{0mm} \end{equation}
For the off-diagonal case $n \neq n'$, the independence of $S[n]$ and $S[n']$ implies $\mathbb{E}[|S[n]|^2 |S[n']|^2] = 1$, resulting in the term
$
\sum_{n \neq n'} P_n P_{n'} \, e^{j\frac{2\pi}{N} k(n-n')}.
$
Combining these results, the expectation of $|r_k|^2$ can be expressed as
\begin{equation} \vspace{0mm}\label{Er2Expression}
	\mathbb{E}[|r_k|^2] = \sum_{n=0}^{N-1} P_n^2 \,\mu_{4,n} 
	+ \sum_{n \neq n'} P_n P_{n'} \, e^{j\frac{2\pi}{N} k(n-n')}.
\vspace{0mm} \end{equation}
Noting that 
$\left|\sum_n P_n e^{j\frac{2\pi}{N}kn}\right|^2
= \sum_n P_n^2 + \sum_{n\neq n'} P_n P_{n'} e^{j\frac{2\pi}{N}k(n-n')}$,
we have
$\sum_{k=1}^{N-1}\sum_{n\neq n'} P_n P_{n'} e^{j\frac{2\pi}{N}k(n-n')}
= \sum_{k=1}^{N-1}\left|\sum_n P_n e^{j\frac{2\pi}{N}kn}\right|^2-(N-1)\sum_n P_n^2$.
Combining with the diagonal part yields
$(N-1)\sum_n(\mu_{4,n}-1)P_n^2+\sum_{k=1}^{N-1}\left|\sum_n P_n e^{j\frac{2\pi}{N}kn}\right|^2$,
which leads to the ESL formula after applying Parseval theorem.

\section*{Appendix B: \textsc{Proof of Theorem \ref{TheoremSINRcoherent}}}

The expected ACF at lag $k$ is
\begin{equation} \vspace{0mm}
	\mathbb{E}[\overline{r}_k] = \sum_{n=0}^{N-1}P_n e^{j\frac{2\pi}{N}n k},
\vspace{0mm} \end{equation}
since $\mathbb{E}[|S^{(m)}[n]|^2]=1$. Therefore, the mean of $\overline{r}_k$ remains unchanged compared to the single-symbol case. That is, the mean is invariant under averaging.

To compute $\mathbb{E}[|\overline{r}_k|^2]$, we analyze the variance of $r_k^{(m)}$:
\begin{align}
	\mathrm{Var}(r_k^{(m)}) &= \mathbb{E}[|r_k^{(m)}|^2] - |\mathbb{E}[r_k^{(m)}]|^2, \\
	\mathbb{E}[|r_k^{(m)}|^2] &= \sum_{n=0}^{N-1}P_n^2\mu_{4,n} + \sum_{n\neq n'}P_nP_{n'} e^{j\frac{2\pi}{N}k(n-n')}, \\
	|\mathbb{E}[r_k^{(m)}]|^2 &= \left|\sum_{n=0}^{N-1}P_n e^{j\frac{2\pi}{N}n k}\right|^2.
\end{align}
Using the identity
\begin{equation} \vspace{0mm}
	\sum_{n\neq n'}P_nP_{n'} e^{j\frac{2\pi}{N}k(n-n')} = \left|\sum_{n=0}^{N-1}P_n e^{j\frac{2\pi}{N}n k}\right|^2 - \sum_{n=0}^{N-1}P_n^2,
\vspace{0mm} \end{equation}
we simplify the variance as
\begin{equation} \vspace{0mm}
	\mathrm{Var}(r_k^{(m)}) = \sum_{n=0}^{N-1}P_n^2(\mu_{4,n}-1).
\vspace{0mm} \end{equation}
Since the autocorrelations $r_k^{(m)}$ across different symbols are independent, the variance of their average is reduced by a factor of $1/M$, i.e.,
\begin{equation} \vspace{0mm}
	\mathrm{Var}(\overline{r}_k) = \frac{1}{M} \mathrm{Var}(r_k^{(m)}).
\vspace{0mm} \end{equation}
This implies that the variance of the autocorrelation follows directly from the single-symbol case, scaled by $1/M$ due to averaging.

Therefore, the second-order moment of the averaged ACF becomes
\begin{equation} \vspace{0mm}
	\mathbb{E}[|\overline{r}_k|^2] = \frac{1}{M}\sum_{n=0}^{N-1}P_n^2(\mu_{4,n}-1) + \left|\sum_{n=0}^{N-1}P_n e^{j\frac{2\pi}{N}n k}\right|^2.
\vspace{0mm} \end{equation}
Using Parseval's identity,
\begin{equation} \vspace{0mm}
	\sum_{k=0}^{N-1}\left|\sum_{n=0}^{N-1}P_n e^{j\frac{2\pi}{N}n k}\right|^2 = N\sum_{n=0}^{N-1}P_n^2,
\vspace{0mm} \end{equation}
we extract the sidelobe power as
\begin{equation} \vspace{0mm}
	\sum_{k=1}^{N-1}\left|\sum_{n=0}^{N-1}P_n e^{j\frac{2\pi}{N}n k}\right|^2 = N\sum_{n=0}^{N-1}P_n^2 - \left(\sum_{n=0}^{N-1}P_n\right)^2.
\vspace{0mm} \end{equation}
Thus, the integrated sidelobe over nonzero lags is
\begin{equation} \vspace{0mm}
	\sum_{k=1}^{N-1}\mathbb{E}[|\overline{r}_k|^2] \!= \! \frac{N-1}{M}\sum_{n=0}^{N-1}P_n^2(\mu_{4,n}-1) 
	+ N\sum_{n=0}^{N-1}P_n^2 - (\sum_{n=0}^{N-1}P_n)^2.
\vspace{0mm} \end{equation}
Substituting these into the SINR definition completes the proof.

\section*{Appendix C: Proof of Theorem \ref{TheoremSINRcoherentRF}}

{{The Tikhonov-regularized version mentioned in the main text is a practical numerical safeguard. For analytical tractability, this appendix derives the SNR for the ideal reciprocal filter $Y[n]/X[n]$. }}

First, for single symbol ($M=1$),
RF per subcarrier is represented by
\begin{equation} \vspace{0mm}
	\tilde Y^{(m)}[n]\;=\;\frac{Y^{(m)}[n]\;X^{(m)}[n]^*}{|X^{(m)}[n]|^2}
	=\sum_{q=1}^Q \alpha_q\,e^{-j\frac{2\pi}{N}n\tau_q}
	+\underbrace{\frac{Z^{(m)}[n]}{X^{(m)}[n]}}_{\triangleq\,\widetilde Z^{(m)}[n]} .
\vspace{0mm} \end{equation}
Conditioned on $X^{(m)}[n]$, $\widetilde Z^{(m)}[n]\sim\mathcal{CN}\!\big(0,\sigma_z^2/|X^{(m)}[n]|^2\big)$, hence
\begin{equation} \vspace{0mm}
	\mathbb E\!\left[\,|\widetilde Z^{(m)}[n]|^2\,\right]
	=\sigma_z^2\,\mathbb E\!\left[|X^{(m)}[n]|^{-2}\right]
	=\sigma_z^2\,\frac{\nu_{-2,n}}{P_n}.
\vspace{0mm} \end{equation}
{{Transforming back to time via the unitary IDFT, we obtain
		\begin{equation} \vspace{0mm}
			\begin{aligned}
				\widetilde y^{(m)}[t]
				&=\frac{1}{\sqrt N}\sum_{n=0}^{N-1}\widetilde Y^{(m)}[n]\,e^{j\frac{2\pi}{N}nt} \\
				&=\sum_{q=1}^Q \alpha_q
				\left(
				\frac{1}{\sqrt N}\sum_{n=0}^{N-1}e^{j\frac{2\pi}{N}n(t-\tau_q)}
				\right)
				+\;w^{(m)}[t] \\
				&=\sum_{q=1}^Q \alpha_q \sqrt N\,\delta_N[t-\tau_q] + w^{(m)}[t],
			\end{aligned}
			\vspace{0mm} \end{equation}
		where $\delta_N[\ell]$ denotes the periodic Kronecker delta, i.e.,
		\begin{equation} \vspace{0mm}
			\delta_N[\ell]=
			\begin{cases}
				1, & \ell=0 \ (\mathrm{mod}\ N),\\
				0, & \text{otherwise}.
			\end{cases}
			\vspace{0mm} \end{equation}
		Also,
		\begin{equation} \vspace{0mm}
			w^{(m)}[t]=\frac{1}{\sqrt N}\sum_{n=0}^{N-1}\widetilde Z^{(m)}[n]\,e^{j\frac{2\pi}{N}nt},
			\vspace{0mm} \end{equation}
		with
		\begin{equation} \vspace{0mm}
			\mathrm{Var}\big(w^{(m)}[t]\big)
			=\frac{1}{N}\sum_{n=0}^{N-1}\sigma_z^2\,\frac{\nu_{-2,n}}{P_n}.
			\vspace{0mm} \end{equation}
		At the target bin $t=\tau_q$, the signal power is therefore
		$|\alpha_q|^2\,|\sqrt N\,\delta_N[0]|^2=|\alpha_q|^2 N$.}}
 Therefore,
\begin{equation} \vspace{0mm}
	\mathrm{SNR}_{q,\;M=1}^{\mathrm{RF}} = \frac{\sigma_{\alpha_q}^2 N^2}{{\sigma_z^2}{}\sum_{n=0}^{N-1}\dfrac{\nu_{-2,n}}{P_n}}.
\vspace{0mm} \end{equation}

Considering coherent averaging over $M$ symbols, the time-domain noise after IDFT,
$\overline w[t]=\frac{1}{\sqrt N}\sum_n \overline Z[n]e^{j\frac{2\pi}{N}nt}$,
has
\begin{equation} \vspace{0mm}
	\mathrm{Var}\big(\overline w[t]\big)
	=\frac{1}{N}\sum_{n=0}^{N-1}\frac{1}{M}\sigma_z^2\,\frac{\nu_{-2,n}}{P_n}
	=\frac{\sigma_z^2}{MN}\sum_{n=0}^{N-1}\frac{\nu_{-2,n}}{P_n}.
\vspace{0mm} \end{equation}
The peak signal power remains $\sigma_{\alpha_q}^2 N$, hence
\begin{equation} \vspace{0mm}
		\mathrm{SNR}_q^{\mathrm{RF}}
		=\frac{\sigma_{\alpha_q}^2  N^2 M}
		{{\sigma_z^2}\displaystyle\sum_{n=0}^{N-1}\frac{\nu_{-2,n}}{P_n}}.
\vspace{0mm} \end{equation}
This thus completes the proof.

\section*{Appendix D: \textsc{Proof of Theorem \ref{TwoconstellationIsEnough}}}
\vspace{0mm}

Let $(\boldsymbol{\eta}^\star,\mathbf P^\star)$ be an optimal solution of
\eqref{eq:constraints}. Fixing $\mathbf P^\star$, we optimize only over
$\boldsymbol{\eta}$. Since maximizing $\Gamma_s$ is equivalent to minimizing
an affine function of $\boldsymbol{\eta}$ for both RF and MF, the resulting
subproblem is a linear program.  

Introduce a slack variable $\xi \ge 0$ and rewrite the rate constraint as
\begin{equation}
	\sum_{j=1}^J \eta_j R_j - \xi  = R_{\min}.
\end{equation}
Then the $\boldsymbol{\eta}$-subproblem can be written in standard form as
\begin{equation}
	\min_{\mathbf y}\ \mathbf c^\top \mathbf y
	\qquad
	\text{s.t.}\qquad
	A\mathbf y=\mathbf b,\quad \mathbf y\ge 0,
\end{equation}
where
$
\mathbf y=[\eta_1,\ldots,\eta_J,\xi ]^\top,
$
$
\mathbf c=[c_1,\ldots,c_J,0]^\top$,
$\mathbf b= [1, P_{\rm ave}, R_{\min}]^\top.
$
Here, $c_j=\nu_{-2,j}/P_j^\star$ for RF, and
$
c_j=\left(\frac{N}{M}(\mu_{4,j}-1)+\frac{N^2}{N-1}\right)(P_j^\star)^2
$
for MF.
and
\begin{equation}
	A=
	\begin{bmatrix}
		1 & 1 & \cdots & 1 & 0\\
		P_1^\star & P_2^\star & \cdots & P_J^\star & 0\\
		R_1 & R_2 & \cdots & R_J & -1
	\end{bmatrix}.
\end{equation}
Hence, $\operatorname{rank}(A)\le 3$. By the fundamental theorem of linear
programming, there exists an optimal basic feasible solution. Therefore, at
most three components of $[\eta_1,\ldots,\eta_J,\xi]$ are nonzero.
If the rate constraint is active at optimality, then $\xi =0$, and thus at most
three entries of $\boldsymbol{\eta}$ are nonzero. If the rate constraint is
inactive at optimality, then $\xi >0$, so at most two entries of
$\boldsymbol{\eta}$ can be nonzero.It is readily seen that the above argument holds for any fixed $\mathbf P$.
Therefore, the same conclusion also holds for the joint optimization of
$\boldsymbol{\eta}$ and $\mathbf P$.

Hence, there exists an optimal solution of the original problem involving at
most two active constellation types when the rate constraint is inactive, and
at most three active constellation types when the rate constraint is active. 

\vspace{0mm}
\footnotesize  	
\bibliography{mybibfile}
\bibliographystyle{IEEEtran}

\end{document}